\documentclass[12pt,a4paper]{article}
\usepackage{}

\marginparwidth 0pt \oddsidemargin 0pt \evensidemargin 0pt
\topmargin -1.4 cm \textheight 24.5 truecm \textwidth 16.0 truecm
\parskip 4pt

\usepackage{latexsym}
\usepackage{amsmath}
\usepackage{amssymb}
\usepackage{arydshln}

\usepackage{cite}

\usepackage{stmaryrd}
\usepackage{enumerate}

\usepackage{hyperref}

\usepackage{color}
\usepackage{lineno}
\usepackage{graphicx}
\usepackage{ae}
\usepackage{amsmath}
\usepackage{amssymb}
\usepackage{latexsym}
\usepackage{url}
\usepackage{epsfig}
\usepackage{mathrsfs}
\usepackage{amsfonts}
\usepackage{amsthm}





\newtheorem{theorem}{Theorem}[section]

\newtheorem{lemma}[theorem]{Lemma}

\newtheorem{cor}[theorem]{Corollary}

\newtheorem{example}{Example}

\theoremstyle{definition}
\newtheorem{definition}{Definition}
\newtheorem{claim}{Claim}



\numberwithin{equation}{section} 

\allowdisplaybreaks    


\def\qed{\hfill$\Box$\vspace{12pt}}



\long\def\delete#1{}



\usepackage{xcolor}
\usepackage[normalem]{ulem}


\usepackage{tikz}
\usetikzlibrary{decorations.markings}
\tikzstyle{vertex}=[circle, inner sep=1.2pt, minimum size=3pt]
\tikzstyle{filledvertex}=[circle, draw, fill, inner sep=1.2pt, minimum size=3pt]

\newcommand{\vertex}{\node[vertex]}

\tikzstyle{directed}=[postaction={decorate,
	decoration={markings,mark=at position 0.5 with {\arrow{stealth}}}
}]

\begin{document}
\title{State transfers in vertex complemented coronas}

\date{}
\author{~Jing Wang$^{a,b}$, ~Xiaogang Liu$^{a,b,c,}$\thanks{Supported by the National Natural Science Foundation of China (No. 11601431), the Natural Science Foundation of Shaanxi Province (No. 2020JM-099) and the Natural Science Foundation of Qinghai Province  (No. 2020-ZJ-920).}~$^,$\thanks{ Corresponding author. Email addresses: wj66@mail.nwpu.edu.cn, xiaogliu@nwpu.edu.cn}
\\[2mm]
{\small $^a$School of Mathematics and Statistics,}\\[-0.8ex]
{\small Northwestern Polytechnical University, Xi'an, Shaanxi 710072, P.R.~China}\\
{\small $^b$Xi'an-Budapest Joint Research Center for Combinatorics,}\\[-0.8ex]
{\small Northwestern Polytechnical University, Xi'an, Shaanxi 710129, P.R. China}\\
{\small $^c$School of Mathematics and Statistics,}\\[-0.8ex]
{\small Qinghai Nationalities University, Xining, Qinghai 810007, P.R. China}
}
\date{}

\openup 0.5\jot
\maketitle

\begin{abstract}
In this paper, we study the existence of perfect state transfer and pretty good state transfer in vertex complemented coronas. We prove that perfect state transfer in vertex complemented coronas is extremely rare. In contrast, we give sufficient conditions for vertex complemented coronas to have pretty good state transfer.

\smallskip

\textbf{Keywords:} Perfect state transfer; Pretty good state transfer; Vertex complemented corona.

\textbf{Mathematics Subject Classification (2010):} 05C50, 81P68

\end{abstract}

\section{Introduction}
Let $G$ be a graph  with adjacency matrix $A_G$. The \emph{transition matrix} \cite{FarhiG98} of $G$ with respect to $A_G$ is defined by
$$
H_{A_{G}}(t) = \exp(-\mathrm{i}tA_{G})=\sum_{k=0}^{\infty}\frac{(-\mathrm{i})^{k} A^{k}_{G} t^{k}}{k!}, ~ t \in \mathbb{R},~\mathrm{i}=\sqrt{-1}.
$$
Let $H_{A_G}(t)_{u,v}$ denote the $(u,v)$-entry of $H_{A_G}(t)$, where $u,v\in V(G)$. If $u$ and $v$ are distinct vertices in $G$ and there is a time $\tau$ such that
$$
|H_{A_G}(\tau)_{u,v}|=1,
$$
then we say that \emph{perfect state transfer} (PST for short) from $u$ to $v$ occurs at time $\tau$ \cite{Bose03}. In particular, if $|H_{A_G}(\tau)_{u,u}|=1$, then we say that $G$ is \emph{periodic} relative to the vertex $u$ at time $\tau$ or $u$ is a \emph{periodic vertex} of $G$ at time $\tau$ \cite{Godsil11}. If every vertex of $G$ is periodic at the same time $\tau$, then $G$ is called a \emph{periodic graph} with the \emph{period} $\tau$  \cite{Godsil11}.

It is known \cite{Bose03} that PST is very important in quantum computing and quantum information processing. However, determining all graphs that admit PST is substantially difficult. In 2012, Godsil \cite[Corollary~6.2]{Godsil12} showed that there are at most finitely many connected graphs with a given maximum valency where PST occurs. Thus, Godsil posed to study a relaxation of PST, \emph{pretty good state transfer} (PGST for short) \cite{CGodsil}. A graph $G$ is said to have PGST from vertex $u$ to vertex $v$ \cite{CGodsil} if for each $\varepsilon > 0$, there exists a time $\tau$ such that
$$
\mid H_{A_G}(\tau)_{u,v}\mid \geq 1-\varepsilon.
$$

Up until now, many graphs have been proved to have or not have PST as well as PGST, including trees \cite{Bose03, CoutinhoL2015, Fan, GodsilKSS12}, Cayley graphs \cite{Basic11, Basic09, CaoCL20, CaoF21, CaoWF20, CC, LiLZZ21, HPal3, HPal, HPal5, Tan19, Tm19}, distance regular graphs \cite{Coutinho15} and some graph operations such as NEPS \cite{chris1, chris2,  LiLZZ21, HPal1, HPal4, SZ}, coronas \cite{AckBCMT16} and joins \cite{Angeles10}.
For more information, we refer the reader to \cite{Coh14, Coh19, CGodsil, Godsil12, HZ, Zhou14}.

In this paper, we investigate the existence of PST and PGST in a new graph operation, the so-called \emph{vertex complemented corona}, whose definition is given in Definition \ref{definition}.

\begin{definition}\label{definition}
Let $G$ be a graph with  vertex set $V(G)=\{v_1,v_2,\ldots,v_n\}$ and let $\overrightarrow{H}=(H_1,H_2,\ldots,H_n)$
be an $n$-tuple of graphs. The \emph{vertex complemented corona}
$G\tilde{\circ}\overrightarrow{H}$ is formed by taking the
disjoint union of $G$ and $H_1,\ldots,H_n$ with each $H_i$ corresponding to the vertex $v_i$, and then joining every vertex in $H_{i}$ to every vertex in $V(G)\setminus\{v_{i}\}$ for $i=1,2,\ldots,n$.
\end{definition}

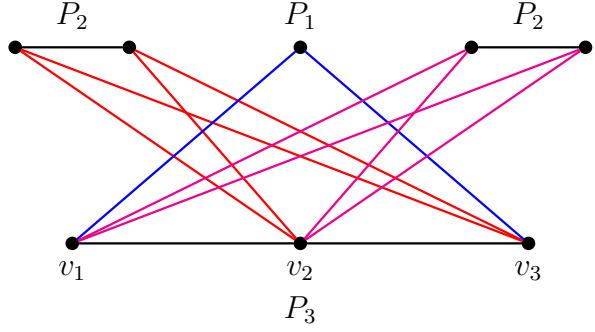
\begin{figure}
\begin{center}
\begin{tikzpicture}[x=0.75cm, y=0.65cm]
\tikzstyle{vertex}=[circle,inner sep=1.8pt, minimum size=0.1pt]

\vertex (a)[fill] at (0,0)[label=below:$v_{1}$]{};
\vertex (b)[fill] at (4,0)[label=below:$v_{2}$]{};
\vertex (c)[fill] at (8,0)[label=below:$v_{3}$]{};
\vertex (g1) at (4,-2)[label=above:$P_{3}$]{};
\draw[line width=.3mm,line cap=round](a)--(b);
\draw[line width=.3mm,line cap=round](b)--(c);

\vertex (d)[fill] at (-1,4){};
\vertex (e)[fill] at (1,4){};
\draw[line width=.3mm,line cap=round](d)--(e);
\vertex (g2) at (0,4)[label=above:$P_{2}$]{};
\draw[line width=.3mm,line cap=round,color=red](d)--(b);
\draw[line width=.3mm,line cap=round,color=red](d)--(c);
\draw[line width=.3mm,line cap=round,color=red](e)--(b);
\draw[line width=.3mm,line cap=round,color=red](e)--(c);

\vertex (f)[fill] at (4,4){};
\vertex (g3) at (4,4)[label=above:$P_{1}$]{};

\draw[line width=.3mm,line cap=round,color=blue](f)--(a);
\draw[line width=.3mm,line cap=round,color=blue](f)--(c);

\vertex (g)[fill] at (7,4){};
\vertex (h)[fill] at (9,4){};
\draw[line width=.3mm,line cap=round](g)--(h);
\vertex (g4) at (8,4)[label=above:$P_{2}$]{};
\draw[line width=.3mm,line cap=round,color=magenta](g)--(a);
\draw[line width=.3mm,line cap=round,color=magenta](g)--(b);
\draw[line width=.3mm,line cap=round,color=magenta](h)--(a);
\draw[line width=.3mm,line cap=round,color=magenta](h)--(b);

\end{tikzpicture}
\end{center}
\vspace{-0.8cm}
\caption{An example of the vertex complemented corona}
\label{VCC-Fig-1}
\end{figure}

Figure \ref{VCC-Fig-1} depicts the vertex complemented corona $P_3\tilde{\circ}\overrightarrow{H}$ with $\overrightarrow{H}=(P_2,P_1,P_2)$, where $P_n$ denotes the path on $n$ vertices.

In our work, we first compute eigenvalues and eigenprojectors of vertex complemented coronas. Then, we prove that PST in vertex complemented coronas is extremely rare by verifying there is no periodic vertex in vertex complemented coronas. In contrast, we give some sufficient conditions for vertex complemented coronas to have PGST.

\section{Preliminaries}

In this section, we list some basic results and notations, which will be useful for our paper.

\begin{lemma}\label{suhur complemented}\emph{(see \cite{Zhang})}
Let $M_{1}$, $M_{2}$, $M_{3}$ and $M_{4}$ be respectively $p\times p$, $p\times q$, $q\times p$ and $q\times q$ matrices with $M_{1}$ and $M_{4}$ invertible. Then
     \begin{align*}
  \det\left(
        \begin{array}{cc}
          M_{1} & M_{2} \\
          M_{3} & M_{4} \\
        \end{array}
      \right) &=\det(M_{4})\cdot\det(M_{1}-M_{2}M_{4}^{-1}M_{3}) \\
              &=\det(M_{1})\cdot\det(M_{4}-M_{3}M_{1}^{-1}M_{2}),
      \end{align*}
where $M_{1}-M_{2}M_{4}^{-1}M_{3}$ and $M_{4}-M_{3}M_{1}^{-1}M_{2}$ are called the Schur complements of $M_{4}$ and $M_{1}$, respectively.
\end{lemma}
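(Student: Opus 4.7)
The plan is to prove both determinant identities by exhibiting explicit block LU-type factorizations, after which the determinant of the block matrix reduces to the determinant of a block triangular matrix, which factors as the product of the determinants of its diagonal blocks.

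For the first identity, I would write down the candidate factorization
\[
\begin{pmatrix} M_1 & M_2 \\ M_3 & M_4 \end{pmatrix}
= \begin{pmatrix} I_p & M_2 M_4^{-1} \\ 0 & I_q \end{pmatrix}
  \begin{pmatrix} M_1 - M_2 M_4^{-1} M_3 & 0 \\ M_3 & M_4 \end{pmatrix},
\]
and verify it by a one-line block multiplication: the $(1,1)$-entry on the right-hand side becomes $(M_1 - M_2 M_4^{-1} M_3) + M_2 M_4^{-1} M_3 = M_1$, the $(1,2)$-entry becomes $M_2 M_4^{-1} M_4 = M_2$, and the bottom block-row is unchanged. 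Now I take determinants: the first factor is block upper triangular with identity diagonal blocks, so its determinant is $1$, and the second factor is block lower triangular, so its determinant equals $\det(M_1 - M_2 M_4^{-1} M_3)\cdot\det(M_4)$. This yields the first displayed equality.

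For the second equality I would use the mirror factorization
\[
\begin{pmatrix} M_1 & M_2 \\ M_3 & M_4 \end{pmatrix}
= \begin{pmatrix} M_1 & 0 \\ M_3 & M_4 - M_3 M_1^{-1} M_2 \end{pmatrix}
  \begin{pmatrix} I_p & M_1^{-1} M_2 \\ 0 & I_q \end{pmatrix},
\]
verified the same way, and then again read off the determinant from the block triangular pieces to obtain $\det(M_1)\cdot\det(M_4 - M_3 M_1^{-1} M_2)$. The invertibility hypotheses on $M_1$ and $M_4$ are precisely what make $M_1^{-1}$ and $M_4^{-1}$ available inside these factorizations, so both identities hold under the stated assumptions. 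There is really no obstacle here: the only conceptual step is to guess the two factorizations, after which each side is a line of block arithmetic combined with the standard determinant formula for block triangular matrices.
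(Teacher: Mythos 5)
Your proof is correct: both block factorizations multiply out exactly as you claim, and the determinant of a block triangular matrix is indeed the product of the determinants of its diagonal blocks, so each identity follows. The paper itself gives no proof of this lemma --- it is quoted as a known result from Zhang's book on the Schur complement --- and your block LU argument is the standard derivation of that classical fact, so there is nothing to reconcile.
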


The $M$-coronal $\Gamma_M(x)$ of an $n\times n$ matrix $M$ \cite{Cui-Tian12, kn:McLeman11} is defined to be the sum of the entries of the matrix $(xI_n-M)^{-1}$, that is,
$$\Gamma_M(x)=\mathbf{j}_n^\top(xI_n-M)^{-1}\mathbf{j}_n,$$
where $\mathbf{j}_n$ denotes the column vector of size $n$ with all entries equal to one, and $\mathbf{j}_n^\top$ denotes the transpose of $\mathbf{j}_n$.

\begin{lemma}\label{gamma}\emph{(see \cite[Proposition 2]{Cui-Tian12})}
If $M$ is an $n\times n$ matrix with each row sum equal to a constant $t$, then
\begin{equation*}
  \Gamma_{M}(x)=\frac{n}{x-t}.
\end{equation*}
\end{lemma}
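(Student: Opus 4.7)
The plan is to exploit the fact that the row-sum hypothesis makes the all-ones vector $\mathbf{j}_n$ an eigenvector of $M$, which lets us evaluate the resolvent $(xI_n - M)^{-1}$ on $\mathbf{j}_n$ directly without inverting the matrix in coordinates.

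Concretely, the first step is to observe that $M\mathbf{j}_n = t\mathbf{j}_n$, since the $i$-th entry of $M\mathbf{j}_n$ is just the $i$-th row sum of $M$, which equals $t$ by hypothesis. Consequently $(xI_n - M)\mathbf{j}_n = (x-t)\mathbf{j}_n$. For $x \neq t$ the matrix $xI_n - M$ is invertible on the span of $\mathbf{j}_n$, and applying $(xI_n - M)^{-1}$ to both sides of the previous identity yields $(xI_n - M)^{-1}\mathbf{j}_n = \tfrac{1}{x-t}\mathbf{j}_n$.

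The final step is to plug this into the definition of the $M$-coronal:
\[
\Gamma_M(x) = \mathbf{j}_n^\top (xI_n - M)^{-1} \mathbf{j}_n = \mathbf{j}_n^\top \left(\tfrac{1}{x-t}\mathbf{j}_n\right) = \tfrac{1}{x-t}\,\mathbf{j}_n^\top \mathbf{j}_n = \tfrac{n}{x-t},
\]
using $\mathbf{j}_n^\top \mathbf{j}_n = n$. The identity extends from generic $x$ to the rational identity $\Gamma_M(x) = n/(x-t)$ as an equality of rational functions in $x$.

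There is no real obstacle here; the result is essentially an eigenvector calculation, and the only subtlety worth flagging is the implicit restriction $x\neq t$ needed for $(xI_n-M)^{-1}$ to act on $\mathbf{j}_n$, which is automatic since $\Gamma_M(x)$ is a rational function and the identity only needs to hold generically to hold as rational functions. No Schur complement or Lemma~\ref{suhur complemented} machinery is required for this particular lemma, although Lemma~\ref{suhur complemented} will presumably be the workhorse later when computing characteristic polynomials of vertex complemented coronas.
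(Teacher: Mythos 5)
Your proof is correct and is the standard eigenvector argument: the paper does not prove this lemma itself but cites it from Cui and Tian, and the cited proof is exactly this computation, namely $M\mathbf{j}_n=t\mathbf{j}_n$ implies $(xI_n-M)^{-1}\mathbf{j}_n=\frac{1}{x-t}\mathbf{j}_n$, whence $\Gamma_M(x)=\mathbf{j}_n^\top(xI_n-M)^{-1}\mathbf{j}_n=\frac{n}{x-t}$. Your remark about interpreting the identity as one of rational functions (so the restriction $x\neq t$ is harmless) is a sensible, if minor, point of care.
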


\begin{lemma}\label{liuzhang1}\emph{(see \cite[Corollary 2.3]{Liu-zhang19})}
Let $\alpha$ be a real number, $A$ an $n\times n$ real matrix, $I_n$ the identity matrix of size $n$, and $J_{n}$ the $n\times n$ matrix with all entries equal to one. Then
\begin{equation*}
  \det(xI_{n}-A-\alpha J_{n})=(1-\alpha \Gamma_{A}(x))\det(xI_{n}-A).
\end{equation*}
\end{lemma}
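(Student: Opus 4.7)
The plan is to exploit that $J_n = \mathbf{j}_n\mathbf{j}_n^\top$ has rank one, so $xI_n - A - \alpha J_n$ is a rank-one additive perturbation of $xI_n - A$. The natural tool is Lemma~\ref{suhur complemented} applied to an augmented $(n+1)\times(n+1)$ block matrix whose two Schur complements will respectively yield the left and right sides of the claimed identity.

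Concretely, I would consider
$$
B(x) \;=\; \begin{pmatrix} xI_n - A & \alpha\,\mathbf{j}_n \\[2pt] \mathbf{j}_n^\top & 1 \end{pmatrix}
$$
and evaluate $\det(B(x))$ in two ways. Using the Schur complement of the $(2,2)$ block (the scalar $1$, trivially invertible), Lemma~\ref{suhur complemented} gives $\det(B(x)) = \det(xI_n - A - \alpha\,\mathbf{j}_n\mathbf{j}_n^\top) = \det(xI_n - A - \alpha J_n)$. Alternatively, whenever $xI_n - A$ is invertible, the Schur complement of the $(1,1)$ block produces
$$
\det(B(x)) \;=\; \det(xI_n - A)\bigl(1 - \alpha\,\mathbf{j}_n^\top(xI_n - A)^{-1}\mathbf{j}_n\bigr) \;=\; \det(xI_n - A)\bigl(1 - \alpha\,\Gamma_A(x)\bigr)
$$
by the definition of the coronal. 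Equating the two expressions will deliver the identity for every $x$ outside the spectrum of $A$.

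The one step that requires a little care---really the only obstacle worth naming---is extending this from \emph{generic} $x$ to \emph{all} $x$, since $\Gamma_A(x)$ has poles exactly at the eigenvalues of $A$. The left-hand side $\det(xI_n - A - \alpha J_n)$ is manifestly a polynomial in $x$. On the right, although $\Gamma_A(x)$ is only a rational function, the product $\Gamma_A(x)\det(xI_n - A) = \mathbf{j}_n^\top\,\mathrm{adj}(xI_n - A)\,\mathbf{j}_n$ is a polynomial, so $(1 - \alpha\Gamma_A(x))\det(xI_n - A)$ is also a polynomial. Two polynomials that agree off the finite spectrum of $A$ must coincide identically, and the identity follows for every $x$.
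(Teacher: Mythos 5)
The paper offers no proof of this lemma: it is imported verbatim from \cite[Corollary 2.3]{Liu-zhang19}, so there is no in-paper argument to compare yours against. Judged on its own, your proof is correct and complete. Evaluating the determinant of the bordered matrix $B(x)$ via both Schur complements of Lemma~\ref{suhur complemented} is exactly the standard matrix-determinant-lemma route for a rank-one update $\alpha J_n=\alpha\,\mathbf{j}_n\mathbf{j}_n^\top$, and the two evaluations do produce the two sides of the identity. You also correctly identify and close the only delicate point: the Schur-complement computation is valid only for $x$ off the spectrum of $A$, and the extension to all $x$ follows because $\Gamma_A(x)\det(xI_n-A)=\mathbf{j}_n^\top\,\mathrm{adj}(xI_n-A)\,\mathbf{j}_n$ is a polynomial, so both sides are polynomials agreeing outside a finite set. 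Nothing is missing.
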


We will need the Kronecker's Approximation Theorem to study the existence of PGST in vertex complemented coronas.
\begin{theorem}\label{hardy-wright}
\emph{(see \cite[Theorem 442]{Hw})}
\label{H-W}
Let $1,\lambda_1,\lambda_2,\ldots,\lambda_m$ be linearly independent over $\mathbb{Q}$. Let $\alpha_1,\alpha_2,\ldots,\alpha_m$ be arbitrary real numbers, and let $\varepsilon$ be a positive real number. Then there exist  integers $l$ and  $q_1,q_2,\ldots, q_m$ such that
\begin{equation}
\label{KroApp}
\mid l\lambda_k-\alpha_k-q_k\mid<\varepsilon,
\end{equation}
for each $k=1,2,\ldots,m$.
\end{theorem}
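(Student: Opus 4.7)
The plan is to reformulate the inequality \eqref{KroApp} as a density statement on the torus $\mathbb{T}^m = (\mathbb{R}/\mathbb{Z})^m$. Reducing $l\lambda_k - \alpha_k$ modulo $1$, the conclusion is equivalent to saying that the orbit
\[
\Omega \;=\; \{\, (l\lambda_1,\,l\lambda_2,\,\ldots,\,l\lambda_m) \bmod \mathbb{Z}^m : l \in \mathbb{Z}\,\}
\]
is dense in $\mathbb{T}^m$, since then we may choose $l$ so that the image of $l\omega$ (where $\omega=(\lambda_1,\dots,\lambda_m)$) lies within $\varepsilon$ of $(\alpha_1,\ldots,\alpha_m) \bmod \mathbb{Z}^m$, and define each $q_k$ to be the accompanying integer shift. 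Thus the entire problem boils down to showing that the cyclic subgroup generated by $\omega$ has dense orbit in the torus.

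Next I would approach density via Pontryagin duality on the compact abelian group $\mathbb{T}^m$. Let $\overline{\Omega}$ be the closure of $\Omega$; it is a closed subgroup. Assume for contradiction that $\overline{\Omega}$ is a proper subgroup. Then there exists a nontrivial continuous character of $\mathbb{T}^m$ that is identically $1$ on $\overline{\Omega}$. Every such character has the form $\chi_{\mathbf{k}}(x)=\exp(2\pi\mathrm{i}\langle \mathbf{k},x\rangle)$ for some nonzero $\mathbf{k}=(k_1,\ldots,k_m)\in\mathbb{Z}^m$. Applying $\chi_{\mathbf{k}}$ to $l\omega$ for every integer $l$ and using that the value must be $1$ forces $k_1\lambda_1+\cdots+k_m\lambda_m\in\mathbb{Z}$. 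Writing this as $k_1\lambda_1+\cdots+k_m\lambda_m-q\cdot 1=0$ for some integer $q$ exhibits a nontrivial rational linear dependence among $1,\lambda_1,\ldots,\lambda_m$, contradicting the hypothesis. Hence $\overline{\Omega}=\mathbb{T}^m$.

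The main obstacle is that the duality argument above silently invokes nontrivial structure theorems (identification of the dual group $\widehat{\mathbb{T}^m}\cong\mathbb{Z}^m$, and the fact that a proper closed subgroup of a locally compact abelian group is annihilated by some nontrivial character). A more hands-on alternative, which I would use if a self-contained proof were required, is Weyl's equidistribution criterion: one shows that $\{l\omega\}_{l\ge 1}$ is equidistributed in $\mathbb{T}^m$ by verifying that for every nonzero $\mathbf{k}\in\mathbb{Z}^m$,
\[
\frac{1}{N}\sum_{l=1}^{N} \exp\!\bigl(2\pi\mathrm{i}\,l\langle \mathbf{k},\omega\rangle\bigr) \longrightarrow 0 \quad (N\to\infty).
\]
The sum is geometric with ratio $\exp(2\pi\mathrm{i}\langle \mathbf{k},\omega\rangle)$, and the $\mathbb{Q}$-linear independence of $1,\lambda_1,\ldots,\lambda_m$ guarantees $\langle \mathbf{k},\omega\rangle\notin\mathbb{Z}$, so the ratio is not $1$ and the average tends to zero. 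Equidistribution implies density, and density is exactly \eqref{KroApp}. Since this is the cited Theorem~442 of Hardy--Wright, I would simply refer to that source rather than reproduce either proof in full.
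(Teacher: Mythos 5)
The paper offers no proof of this statement; it is quoted directly as Theorem~442 of Hardy and Wright, so there is no internal argument to compare against. Your sketch is nonetheless a correct proof of the multidimensional Kronecker theorem. The reduction of the inequality \eqref{KroApp} to density of the cyclic orbit of $\omega=(\lambda_1,\dots,\lambda_m)$ in $\mathbb{T}^m$ is exact; the duality step correctly turns a proper closed subgroup into a nonzero $\mathbf{k}\in\mathbb{Z}^m$ with $k_1\lambda_1+\cdots+k_m\lambda_m\in\mathbb{Z}$, which is precisely a nontrivial rational dependence among $1,\lambda_1,\dots,\lambda_m$; and the Weyl-criterion variant is likewise sound, since $\mathbb{Q}$-linear independence is exactly what forbids $\langle\mathbf{k},\omega\rangle\in\mathbb{Z}$ and hence makes every nontrivial exponential average vanish (note that equidistribution of the positive orbit $\{l\omega\}_{l\ge1}$ already suffices, as the theorem only asks for some integer $l$). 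Your route does differ from the cited source: Hardy and Wright establish Theorem~442 by an elementary induction on $m$, with no appeal to characters or equidistribution, whereas your approach trades that elementary machinery for a shorter, more conceptual argument resting on Pontryagin duality or Weyl's theorem. For the purposes of this paper, deferring to the citation, as both you and the authors do, is the appropriate choice.
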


For brevity, whenever we have an inequality of the form $|\alpha-\beta|<\varepsilon$ for arbitrarily small $\varepsilon$, we will write instead $\alpha\approx\beta$ and omit the explicit dependence on $\varepsilon$. For example, (\ref{KroApp}) will be represented as $l\lambda_k-q_k\approx\alpha_k$.

When we study the PGST in vertex complemented coronas, the following result will be used to verify whether a set of numbers are linearly independent over the rational numbers.

\begin{theorem}
\emph{(see \cite[Theorem 1a]{Ri}) }
\label{Ri} Let $p_1,p_2,\ldots,p_k$ be distinct positive  primes.
Then the set $\left\{\sqrt[n]{p_1^{m(1)}\cdots p_k^{m(k)}}: 0\leq m(i)<n,~1\leq i \leq k \right\}$ is linearly independent over the set of rational numbers $\mathbb{Q}$.
\end{theorem}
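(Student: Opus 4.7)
The plan is to prove that $[K:\mathbb{Q}]=n^k$, where $K=\mathbb{Q}(\sqrt[n]{p_1},\ldots,\sqrt[n]{p_k})$. Since $\sqrt[n]{p_i}^{\,n}=p_i\in\mathbb{Q}$ lets one reduce every exponent modulo $n$, the $n^k$ displayed monomials $\sqrt[n]{p_1^{m(1)}\cdots p_k^{m(k)}}$ manifestly span $K$ as a $\mathbb{Q}$-vector space. Once the dimension is shown to equal $n^k$, this spanning set of matching cardinality is forced to be a $\mathbb{Q}$-basis, hence linearly independent, which is the desired conclusion.

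I would establish the degree formula by induction on $k$, using multiplicativity in the tower $\mathbb{Q}\subset K_1\subset K_2\subset\cdots\subset K_k=K$, where $K_j=\mathbb{Q}(\sqrt[n]{p_1},\ldots,\sqrt[n]{p_j})$. The base case $k=1$ is immediate from Eisenstein's criterion at $p_1$ applied to $x^n-p_1$. For the inductive step, assuming $[K_{k-1}:\mathbb{Q}]=n^{k-1}$, it suffices to show that $x^n-p_k$ remains irreducible over $K_{k-1}$, so that $[K_k:K_{k-1}]=n$.

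For this irreducibility I would invoke the classical Capelli--Vahlen criterion: in characteristic zero, $x^n-a$ is irreducible over a field $F$ provided (i) $a$ is not a $q$-th power in $F$ for any prime $q\mid n$, and (ii) if $4\mid n$, then $a\ne -4c^4$ for every $c\in F$. Condition (ii) is disposed of by choosing the real positive $n$-th roots to embed $K_{k-1}\hookrightarrow\mathbb{R}$, so that $p_k>0$ while $-4c^4\le 0$.

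The main obstacle is verifying condition (i): ruling out an equality $p_k=\alpha^q$ with $\alpha\in K_{k-1}$ for any prime $q\mid n$. Here I would argue via $p_k$-adic valuations, or equivalently via prime ideal factorization in the ring of integers $\mathcal{O}_{K_{k-1}}$. Because $p_k$ is coprime to each $p_j$ with $j<k$, the rational prime $p_k$ is unramified in $\mathcal{O}_{K_{k-1}}$, so every extension of $v_{p_k}$ to $K_{k-1}^{\times}$ takes values in $\mathbb{Z}$; but $\alpha^q=p_k$ would force such a valuation of $\alpha$ to equal $1/q$, a contradiction. With (i) and (ii) in hand, Capelli--Vahlen delivers the irreducibility of $x^n-p_k$ over $K_{k-1}$, completing the induction and hence the proof.
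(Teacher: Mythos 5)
The paper offers no proof of this statement---it is quoted verbatim from Richards \cite[Theorem 1a]{Ri}---so there is nothing in the paper to compare your argument against; I can only assess it on its own terms. Your overall strategy is the standard and correct one: show $[K:\mathbb{Q}]=n^k$ for $K=\mathbb{Q}(\sqrt[n]{p_1},\ldots,\sqrt[n]{p_k})$, note that the $n^k$ listed radicals span $K$ over $\mathbb{Q}$, and conclude they form a basis. The reduction to irreducibility of $x^n-p_k$ over $K_{k-1}$ via the tower, the appeal to Vahlen--Capelli, and the disposal of the $-4c^4$ condition by embedding everything in $\mathbb{R}$ are all fine.

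The gap is in your verification of condition (i). The claim that $p_k$ is unramified in $\mathcal{O}_{K_{k-1}}$ merely because $p_k$ is coprime to $p_1,\ldots,p_{k-1}$ is false: the discriminant of $x^n-p_j$ is $\pm n^n p_j^{n-1}$, so any prime dividing $n$ may ramify in $K_{k-1}$. Concretely, for $n=2$ the prime $2$ ramifies in $\mathbb{Q}(\sqrt{3})$, since $(1+\sqrt{3})^2=2(2+\sqrt{3})$ and $2+\sqrt{3}$ is a unit, whence $(2)=(1+\sqrt{3})^2$. Thus when $p_k=2$ and $q=2$, the extension $w$ of $v_2$ has $w(2)=2$, and a hypothetical $\alpha$ with $\alpha^2=2$ would satisfy $w(\alpha)=1$: an integer, so no contradiction. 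In general your argument breaks exactly when $p_k\mid n$ and $q$ divides the ramification index of $p_k$ in $K_{k-1}$, which does happen. This is not a negligible corner case for this paper: the only instance actually used is $n=2$ (Corollary \ref{independent}), where the problematic prime is $p_k=2$. For $n=2$ one can repair the proof simply by adjoining $\sqrt{2}$ first, so that every subsequently adjoined prime is odd and genuinely unramified in the field built so far; but for general $n$ with several of the $p_i$ dividing $n$ this reordering does not suffice, and one needs a different proof that $p_k\notin K_{k-1}^{q}$---for instance a strengthened induction asserting multiplicative independence of $p_1,\ldots,p_{k-1}$ modulo $q$-th powers (via Kummer theory over $\mathbb{Q}(\zeta_q)$), or Richards' own automorphism/averaging argument, which sidesteps ramification entirely.
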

When $n=2$, Theorem \ref{Ri} immediately implies the following result.
\begin{cor}
\label{independent}
The set $\left\{\sqrt{\Delta}: \Delta\text{~is~a~square-free~integer}\right\}$ is linearly independent over the set of rational numbers $\mathbb{Q}$.
\end{cor}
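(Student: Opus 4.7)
The plan is to reduce the statement directly to Theorem \ref{Ri} with $n=2$. Linear independence over $\mathbb{Q}$ is a property of finite subsets, so it suffices to fix an arbitrary finite collection of distinct square-free positive integers $\Delta_1,\ldots,\Delta_r$ and show that $\sqrt{\Delta_1},\ldots,\sqrt{\Delta_r}$ are $\mathbb{Q}$-linearly independent.

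First I would let $p_1,\ldots,p_k$ be the list of all distinct primes appearing in the factorizations of $\Delta_1,\ldots,\Delta_r$. Since each $\Delta_j$ is square-free, it has a unique representation
$$\Delta_j=p_1^{m_j(1)}p_2^{m_j(2)}\cdots p_k^{m_j(k)},\qquad m_j(i)\in\{0,1\},$$
and thus
$$\sqrt{\Delta_j}=\sqrt[2]{p_1^{m_j(1)}p_2^{m_j(2)}\cdots p_k^{m_j(k)}}.$$
Because $\Delta_1,\ldots,\Delta_r$ are distinct and factorization is unique, the exponent tuples $(m_j(1),\ldots,m_j(k))$ for $j=1,\ldots,r$ are pairwise distinct elements of $\{0,1\}^k$. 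Consequently $\{\sqrt{\Delta_1},\ldots,\sqrt{\Delta_r}\}$ is contained in the set
$$\bigl\{\sqrt[2]{p_1^{m(1)}\cdots p_k^{m(k)}}:\ 0\le m(i)<2,\ 1\le i\le k\bigr\},$$
which Theorem \ref{Ri} (applied with $n=2$) asserts is linearly independent over $\mathbb{Q}$. Any subset of a $\mathbb{Q}$-linearly independent set is $\mathbb{Q}$-linearly independent, so the conclusion follows for this finite collection, and hence for the whole set of square roots of square-free integers.

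There is essentially no obstacle here: the corollary is just the specialization $n=2$ of Theorem \ref{Ri}, and the only work is the bookkeeping step of recasting a square-free integer as a product of distinct primes so that it matches the form $p_1^{m(1)}\cdots p_k^{m(k)}$ with $m(i)\in\{0,1\}$ appearing in the theorem. If one wishes to include $\Delta=1$ in the statement, it corresponds to the tuple with all $m(i)=0$ and is absorbed without modification.
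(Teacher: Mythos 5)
Your proposal is correct and is exactly the reduction the paper intends: the paper gives no separate proof, stating only that the corollary follows immediately from Theorem \ref{Ri} with $n=2$, and your argument simply fills in the routine bookkeeping (restricting to a finite subset, writing each square-free $\Delta_j$ as a product of distinct primes with exponents in $\{0,1\}$, and noting that a subset of a linearly independent set is linearly independent). No gaps.
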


Let $G$ be a graph with adjacency matrix $A_G$.  The eigenvalues of $A_G$ are called the \emph{eigenvalues} of $G$. We use $\mathrm{Spec}_G$ to denote the set of all distinct eigenvalues of $G$. Suppose that $ \lambda_0>\lambda_1>\cdots>\lambda_p$ are all distinct eigenvalues of $G$ and $\left\{\mathbf{x}_{1}^{(j)}, \mathbf{x}_{2}^{(j)}, \ldots, \mathbf{x}_{r_j}^{(j)}\right\}$ is an orthonormal basis of the eigenspace associated with $\lambda_{j}$ with multiplicity $s_j$, $j=0,1,\ldots,p$.  Let $\mathbf{x}^H$ denote the conjugate transpose of a column vector $\mathbf{x}$. Then, for each eigenvalue $\lambda_j$ of $G$, define
$$
E_{\lambda_j} = \sum\limits_{i=1}^{r_j}\mathbf{x}_i^{(j)} \left(\mathbf{x}_i^{(j)}\right)^H,
$$
which is usually called the \emph{eigenprojector} (or orthogonal projector onto an eigenspace) corresponding to  $\lambda_j$ of $G$. Note that $\sum_{j=0}^pE_{\lambda_j}=I$ (the identity matrix). Then
\begin{equation}
\label{spect1}
A_G=A_G\sum_{j=0}^pE_{\lambda_j} =\sum_{j=0}^p\sum\limits_{i=1}^{r_j}A_G\mathbf{x}_i^{(j)} \left(\mathbf{x}_i^{(j)}\right)^H  =\sum_{j=0}^p\sum\limits_{i=1}^{r_j}\lambda_j\mathbf{x}_i^{(j)} \left(\mathbf{x}_i^{(j)}\right)^H  =\sum_{j=0}^{p}\lambda_jE_{\lambda_j},
\end{equation}
which is called the \emph{spectral decomposition of $A_G$ with respect to the distinct eigenvalues}  (see ``Spectral Theorem for Diagonalizable Matrices'' in \cite[Page 517]{MAALA}). Note that $E_{\lambda_j}^{2}=E_{\lambda_j}$ and $E_{\lambda_j}E_{\lambda_h}=\mathbf{0}$ for $j\neq h$, where $\mathbf{0}$ denotes the zero matrix. So, by (\ref{spect1}), we have
\begin{equation}\label{SpecDec2-1}
H_{A_G}(t)=\sum_{k\geq 0}\dfrac{(-\mathrm{i})^{k}A_G^{k}t^{k}}{k!}=\sum_{k\geq 0}\dfrac{(-\mathrm{i})^{k}\left(\sum\limits_{j=0}^{p}\lambda_{j}^{k}E_{\lambda_j}\right)t^{k}}{k!} =\sum_{j=0}^{p}\exp(-\mathrm{i}t\lambda_{j})E_{\lambda_j}.
\end{equation}

The \emph{eigenvalue support} of a vertex $u$ in $G$, denoted by $\mathrm{{supp}}_G(u)$, is the set of all eigenvalues $\lambda$ of $G$ such that $E_\lambda\mathbf{e}_u\neq \mathbf{0}$, where $\mathbf{e}_u$ is the characteristic vector corresponding to $u$. Two vertices $u$ and $v$ are \emph{strongly cospectral} if $E_\lambda\mathbf{e}_u=\pm E_\lambda\mathbf{e}_v$ for each eigenvalue $\lambda$ of $G$.

In the following, we state some useful results about PST and periodicity.

\begin{lemma}\label{periodic-pst}\emph{(see \cite[Lemma 2.1]{Godsil11})}
If $G$ has PST between vertices $u$ and $v$ at time $t$, then $G$ is periodic at $u$ at time $2t$.
\end{lemma}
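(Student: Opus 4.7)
The plan is to exploit two structural properties of the transition matrix $U := H_{A_G}(t) = \exp(-\mathrm{i}tA_G)$. Since $A_G$ is real symmetric, $-\mathrm{i}tA_G$ is skew-Hermitian, so $U$ is unitary; and since $A_G^\top = A_G$, the matrix exponential is also symmetric, i.e.\ $U^\top = U$. Together these force a very rigid structure on $U$ whenever PST occurs.

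First I would use the PST hypothesis $|U_{u,v}| = 1$ combined with the fact that column $v$ of $U$ has Euclidean norm $1$ (by unitarity) to conclude that $U_{k,v} = 0$ for every $k \neq u$. Setting $\gamma := U_{u,v}$, this gives $U\mathbf{e}_v = \gamma\mathbf{e}_u$ with $|\gamma|=1$. The symmetry $U_{v,u} = U_{u,v} = \gamma$, together with the unit-norm condition on column $u$, then yields in exactly the same way $U\mathbf{e}_u = \gamma\mathbf{e}_v$. So $U$ acts on the two-dimensional subspace $\mathrm{span}\{\mathbf{e}_u,\mathbf{e}_v\}$ essentially as a scalar-times-swap.

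The final step uses $H_{A_G}(2t) = H_{A_G}(t)^2$, which is immediate from $-\mathrm{i}tA_G$ commuting with itself and the semigroup identity $e^{X+Y}=e^{X}e^{Y}$ for commuting $X,Y$. Thus
\[
H_{A_G}(2t)\mathbf{e}_u \;=\; U^2\mathbf{e}_u \;=\; U(\gamma\mathbf{e}_v) \;=\; \gamma^2\mathbf{e}_u,
\]
so $|H_{A_G}(2t)_{u,u}| = |\gamma|^2 = 1$, which is exactly periodicity of $u$ at time $2t$.

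I do not foresee any real obstacle here; the only conceptual move is to recognize that PST in a symmetric unitary matrix collapses the relevant columns onto a permutation-like $2\times 2$ block on $\{u,v\}$, after which squaring is trivial. An alternative route through the spectral decomposition \eqref{SpecDec2-1} would also work, but it would require juggling the phases $\exp(-\mathrm{i}t\lambda_j)$ across the eigenvalue support $\mathrm{supp}_G(u)$ and verifying that $E_{\lambda_j}\mathbf{e}_u = \pm E_{\lambda_j}\mathbf{e}_v$ for strongly cospectral $u,v$ — strictly more machinery than the unitary-symmetric column argument needs.
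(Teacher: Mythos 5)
Your argument is correct. The paper does not prove this lemma itself --- it is quoted from Godsil's \emph{Periodic graphs} --- and your unitary-plus-symmetric column argument ($U\mathbf{e}_v=\gamma\mathbf{e}_u$, $U\mathbf{e}_u=\gamma\mathbf{e}_v$, hence $U^2\mathbf{e}_u=\gamma^2\mathbf{e}_u$) is precisely the standard proof given in that reference, with every step (unitarity forcing the rest of the column to vanish, $U^\top=U$, and $H_{A_G}(2t)=H_{A_G}(t)^2$) justified.
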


\begin{lemma}\label{condition}\emph{(see \cite[Theorem 6.1]{Godsil12})}
A graph $G$ is periodic at vertex $u$ if and only if either:
\begin{itemize}
\item[\rm (a)]
all eigenvalues in $\mathrm{{supp}}_G(u)$ are integers; or
\item[\rm (b)]
there are square-free integer $\Delta$ and integer $a$ so that each eigenvalue $\lambda$ in $\mathrm{{supp}}_G(u)$ is of the form $\lambda=\frac{1}{2}\left(a+b_\lambda\sqrt{\Delta}\right)$, for some integer $b_\lambda$.
\end{itemize}

\end{lemma}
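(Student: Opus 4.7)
The plan is to characterize periodicity at $u$ via the spectral decomposition of $H_{A_G}(\tau)$ and then reduce the resulting condition to a number-theoretic statement about the eigenvalues in $\mathrm{supp}_G(u)$. First, applying (\ref{SpecDec2-1}), one has
\begin{equation*}
H_{A_G}(\tau)_{u,u}=\sum_{\lambda\in\mathrm{Spec}_G}\exp(-\mathrm{i}\tau\lambda)\,\mathbf{e}_u^{\top}E_{\lambda}\mathbf{e}_u=\sum_{\lambda\in\mathrm{supp}_G(u)}\exp(-\mathrm{i}\tau\lambda)\,w_{\lambda},
\end{equation*}
where $w_{\lambda}=\mathbf{e}_u^{\top}E_{\lambda}\mathbf{e}_u=\|E_{\lambda}\mathbf{e}_u\|^2>0$ for $\lambda\in\mathrm{supp}_G(u)$, and $\sum_\lambda w_\lambda=1$ since $\sum_\lambda E_\lambda=I$.

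Next, I would invoke the equality case of the triangle inequality: because the weights $w_\lambda$ are strictly positive on the support and sum to one, $|H_{A_G}(\tau)_{u,u}|=1$ holds if and only if all the unit complex numbers $\exp(-\mathrm{i}\tau\lambda)$ with $\lambda\in\mathrm{supp}_G(u)$ coincide. Equivalently, periodicity at $u$ at time $\tau$ is the assertion
\begin{equation*}
\tau(\lambda-\mu)\in 2\pi\mathbb{Z}\qquad\text{for all }\lambda,\mu\in\mathrm{supp}_G(u).
\end{equation*}
Fixing $\lambda_0\in\mathrm{supp}_G(u)$ and any other $\lambda_1\in\mathrm{supp}_G(u)$, this forces each difference $\lambda-\lambda_0$ with $\lambda\in\mathrm{supp}_G(u)$ to be a rational multiple of $\lambda_1-\lambda_0$. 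Thus the existence of a periodicity time is equivalent to the purely algebraic condition that all ratios of differences of eigenvalues in $\mathrm{supp}_G(u)$ are rational.

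The hard part is promoting this rational-ratio condition to the dichotomy (a)/(b); this is where algebraic number theory enters. The eigenvalues of $G$ are algebraic integers, being roots of the monic integer characteristic polynomial of $A_G$, and $\mathrm{supp}_G(u)$ is stable under the Galois action on $\overline{\mathbb{Q}}$: for a Galois automorphism $\sigma$, applying $\sigma$ entrywise to $E_\lambda$ yields $E_{\sigma(\lambda)}$, and since $\mathbf{e}_u$ has rational entries, $E_\lambda\mathbf{e}_u\neq\mathbf{0}$ if and only if $E_{\sigma(\lambda)}\mathbf{e}_u\neq\mathbf{0}$. After translating by $\lambda_0$, the support lies on the $\mathbb{Q}$-affine line $\lambda_0+\mathbb{Q}(\lambda_1-\lambda_0)$; Galois invariance then forces every Galois conjugate of $\lambda_1-\lambda_0$ to be a rational multiple of itself. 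Since the only finite subgroups of $\mathbb{Q}^{\times}$ are $\{1\}$ and $\{\pm 1\}$, the degree $[\mathbb{Q}(\lambda_1-\lambda_0):\mathbb{Q}]$ is at most $2$. If this degree is one, every $\lambda\in\mathrm{supp}_G(u)$ is a rational algebraic integer, hence an ordinary integer, yielding (a). Otherwise the ambient quadratic field is $\mathbb{Q}(\sqrt{\Delta})$ for a unique square-free integer $\Delta$, and the algebraic-integer condition restricts denominators to at most $2$ while forcing a common rational trace $a\in\mathbb{Z}$, giving the form $\lambda=\tfrac{1}{2}(a+b_\lambda\sqrt{\Delta})$ in (b).

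Finally, the reverse implication is a short direct verification that I would carry out last: in case (a), choosing $\tau=2\pi$ makes every $\exp(-\mathrm{i}\tau\lambda)=1$; in case (b) with $\lambda=(a+b_\lambda\sqrt{\Delta})/2$, the choice $\tau=4\pi/\sqrt{\Delta}$ gives $\tau(\lambda-\mu)=2\pi(b_\lambda-b_\mu)\in 2\pi\mathbb{Z}$, so the unit complex numbers $\exp(-\mathrm{i}\tau\lambda)$ all coincide and $|H_{A_G}(\tau)_{u,u}|=1$, as required.
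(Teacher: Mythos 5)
The paper offers no proof of this lemma: it is imported verbatim from Godsil's \emph{When can perfect state transfer occur?} (Theorem~6.1 there), so there is no internal argument to compare against. Your proposal is, in substance, a correct reconstruction of Godsil's proof: reduce periodicity at $u$ to the ratio condition (all ratios of differences of eigenvalues in $\mathrm{supp}_G(u)$ are rational) via the equality case of the triangle inequality applied to $H_{A_G}(\tau)_{u,u}=\sum_\lambda e^{-\mathrm{i}\tau\lambda}\,\|E_\lambda\mathbf{e}_u\|^2$, then use Galois stability of the eigenvalue support ($\sigma(E_\lambda)=E_{\sigma(\lambda)}$, hence $\sigma$ permutes $\mathrm{supp}_G(u)$) to bound $[\mathbb{Q}(\lambda_1-\lambda_0):\mathbb{Q}]\le 2$. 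Two places in your sketch are compressed and should be written out. First, in the degree-one case you conclude that every $\lambda$ in the support is a rational algebraic integer, but a priori you only know the differences $\lambda-\lambda_0$ are rational; you still need $\lambda_0\in\mathbb{Q}$, which follows by averaging the relation $\sigma(\lambda_0)\in\lambda_0+\mathbb{Q}$ over the Galois group. Second, in the quadratic case the existence of a \emph{common} integer $a$ (i.e., that all eigenvalues in the support share the same rational part) is not a consequence of the algebraic-integer condition alone; it again comes from Galois: the homomorphism $\sigma\mapsto\sigma(d)/d$ sends the nontrivial automorphism of $\mathbb{Q}(\sqrt{\Delta})$ to $-1$, so $d=\lambda_1-\lambda_0$ lies in $\mathbb{Q}\sqrt{\Delta}$, hence every difference of support eigenvalues is purely irrational and the traces $\lambda+\sigma(\lambda)$ all coincide; one then uses square-freeness of $\Delta$ to get $b_\lambda\in\mathbb{Z}$ from $b_\lambda^2\Delta\in\mathbb{Z}$. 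With those lines added (and the trivial one-element-support case noted) the forward direction is complete, and your converse verification is fine.
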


Coutinho gave a necessary and sufficient condition for a graph to have PST.

\begin{lemma}\label{pst}\emph{(see \cite[Theorem 2.4.4]{Coh14})}
Let $G$ be a graph and let $u,v$ be two distinct vertices of $G$. Then there exists PST between $u$ and $v$ at time $t$ if and only if all of the following conditions hold:
\begin{itemize}
\item[\rm (a)] Vertices $u$ and $v$ are strongly cospectral.
\item[\rm (b)] There are integers $a$ and $\Delta$, where $\Delta$ is square-free, so that for each eigenvalue $\lambda$ in $supp_G(u)$:
    \begin{itemize}
    \item[\rm (i)]$\lambda=\frac{1}{2}\left(a+b_\lambda\sqrt{\Delta}\right)$, for some integer $b_\lambda$.
    \item[\rm (ii)]$\mathbf{e}_u^\top E_\lambda(G)\mathbf{e}_v$ is positive if and only if $(\rho(G)-\lambda)/g\sqrt{\Delta}$ is even, where
        $$
        g:=\gcd\left(\left\{\frac{\rho(G)-\lambda}{\sqrt{\Delta}}:\lambda\in \mathrm{{supp}}_G(u)\right\}\right),
        $$
        and $\rho(G)$ denotes the largest eigenvalue of $G$.
    \end{itemize}
\end{itemize}
Moreover, if the above conditions hold, then there is a minimum time of PST between $u$ and $v$ given by $t_0:=\frac{\pi}{g\sqrt{\Delta}}$.
\end{lemma}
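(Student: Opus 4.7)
The plan is to extract eigenspace-level data from the identity $|H_{A_G}(\tau)_{u,v}|=1$ using the unitarity of the transition matrix, and then feed the resulting periodicity into the arithmetic criterion of Lemma \ref{condition}. Since $A_G$ is real symmetric, $H_{A_G}(\tau)$ is unitary, so $|H_{A_G}(\tau)_{u,v}|=1$ with $u\neq v$ is equivalent to $H_{A_G}(\tau)\mathbf{e}_u=\gamma\mathbf{e}_v$ for some unit complex scalar $\gamma$. Applying the eigenprojector $E_\lambda$ to this equation, and using the identity $E_\lambda H_{A_G}(\tau)=\exp(-\mathrm{i}\tau\lambda)E_\lambda$ that follows immediately from (\ref{SpecDec2-1}), yields
\begin{equation*}
  \exp(-\mathrm{i}\tau\lambda)\,E_\lambda\mathbf{e}_u=\gamma\, E_\lambda\mathbf{e}_v
\end{equation*}
for every eigenvalue $\lambda$ of $G$. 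Because $E_\lambda$ is a real matrix, both sides are real vectors; hence for each $\lambda\in\mathrm{{supp}}_G(u)$ the scalar $\varepsilon_\lambda:=\gamma^{-1}\exp(-\mathrm{i}\tau\lambda)$ is a sign in $\{\pm 1\}$ and $E_\lambda\mathbf{e}_u=\varepsilon_\lambda E_\lambda\mathbf{e}_v$. This shows $\mathrm{{supp}}_G(u)=\mathrm{{supp}}_G(v)$ and delivers the strong cospectrality condition (a).

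Next I would extract the arithmetic. By Lemma \ref{periodic-pst}, PST at time $\tau$ forces $G$ to be periodic at $u$ at time $2\tau$, so Lemma \ref{condition} supplies a common square-free integer $\Delta$ and integer $a$ such that every $\lambda\in\mathrm{{supp}}_G(u)$ has the form $\lambda=\tfrac{1}{2}(a+b_\lambda\sqrt{\Delta})$ with $b_\lambda\in\mathbb{Z}$, giving (b)(i) (the all-integer alternative of Lemma \ref{condition} corresponds to $\Delta=1$). For the sign condition (b)(ii) and the minimum time, I would normalize $\varepsilon_{\lambda_0}=+1$ at $\lambda_0=\rho(G)$ and take $\gamma=\exp(-\mathrm{i}\tau\lambda_0)$. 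Writing $\rho(G)-\lambda=q_\lambda g\sqrt{\Delta}$ with $q_\lambda\in\mathbb{Z}$ (well-defined from the definition of $g$ together with (b)(i)), one finds $\varepsilon_\lambda=\exp(\mathrm{i}\tau g\sqrt{\Delta}\,q_\lambda)$; the smallest $\tau>0$ turning every such expression into a sign is $\tau_0=\pi/(g\sqrt{\Delta})$, and at this time $\varepsilon_\lambda=(-1)^{q_\lambda}$. Since $\mathbf{e}_u^\top E_\lambda\mathbf{e}_v=\varepsilon_\lambda\|E_\lambda\mathbf{e}_v\|^2$, the sign of $\mathbf{e}_u^\top E_\lambda\mathbf{e}_v$ equals $\varepsilon_\lambda$, which is precisely the parity criterion (b)(ii). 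Sufficiency follows by running these identities backward: under (a), (b)(i), (b)(ii), the spectral expansion of $H_{A_G}(\tau_0)\mathbf{e}_u$ collapses to $\gamma\mathbf{e}_v$, forcing $|H_{A_G}(\tau_0)_{u,v}|=1$.

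The main obstacle is the sign normalization at the Perron root: one must justify that $\mathbf{e}_u^\top E_{\lambda_0}\mathbf{e}_v$ is positive, so that setting $\varepsilon_{\lambda_0}=+1$ is consistent with the parity translation in (b)(ii). When $G$ is connected, the Perron--Frobenius theorem guarantees that $E_{\lambda_0}$ is the rank-one projector onto a strictly positive eigenvector, which immediately gives $\mathbf{e}_u^\top E_{\lambda_0}\mathbf{e}_v>0$; the disconnected case has to be handled component-wise. A secondary subtlety is checking that the gcd defining $g$ is meaningful, i.e., that $(\rho(G)-\lambda)/\sqrt{\Delta}$ is rational for every $\lambda\in\mathrm{{supp}}_G(u)$, but this is immediate from the common $a$ in (b)(i), which gives $\rho(G)-\lambda=\tfrac{1}{2}(b_{\lambda_0}-b_\lambda)\sqrt{\Delta}$.
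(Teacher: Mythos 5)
The paper offers no proof of this lemma: it is imported verbatim from Coutinho's thesis (Theorem~2.4.4 of \cite{Coh14}), so there is no in-paper argument to compare yours against. Your outline is the standard proof of this characterization and is essentially correct: unitarity of $H_{A_G}(\tau)$ turns $|H_{A_G}(\tau)_{u,v}|=1$ into $H_{A_G}(\tau)\mathbf{e}_u=\gamma\mathbf{e}_v$, projecting with the real matrices $E_\lambda$ yields the signs $\varepsilon_\lambda\in\{\pm1\}$ and hence strong cospectrality, Lemma~\ref{periodic-pst} plus Lemma~\ref{condition} give the quadratic-integer form in (b)(i), and the Perron--Frobenius normalization $\mathbf{e}_u^\top E_{\rho(G)}\mathbf{e}_v>0$ anchors the parity translation in (b)(ii) together with the minimum time $\pi/(g\sqrt{\Delta})$.

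Two points deserve tightening. First, in the necessity direction you derive $\varepsilon_\lambda=(-1)^{q_\lambda}$ ``at time $\tau_0$,'' but the hypothesis is PST at an arbitrary time $t$; the argument should be that $e^{\mathrm{i}t(\rho(G)-\lambda)}=\pm1$ for all $\lambda$ in the support together with $\gcd_\lambda q_\lambda=1$ forces $t=c\,\pi/(g\sqrt{\Delta})$ for a positive integer $c$, and then $c$ must be odd (if $c$ were even every $\varepsilon_\lambda$ would equal $+1$, whence $\mathbf{e}_u=\sum_\lambda E_\lambda\mathbf{e}_u=\sum_\lambda E_\lambda\mathbf{e}_v=\mathbf{e}_v$, contradicting $u\neq v$), so the realized sign is indeed $(-1)^{q_\lambda}$ independent of $c$. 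Second, the sufficiency direction is dispatched in one sentence; it is worth writing out that at $\tau_0$ one has $e^{-\mathrm{i}\tau_0\lambda}=e^{-\mathrm{i}\tau_0\rho(G)}(-1)^{q_\lambda}$, so that (a), (b)(i) and (b)(ii) make the spectral sum collapse to $e^{-\mathrm{i}\tau_0\rho(G)}\mathbf{e}_v$. With those additions your argument is a complete and faithful reconstruction of the cited result.
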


\section{Eigenvalues and eigenprojectors of vertex complemented coronas}

Before presenting the main results of this section, we first give some frequently used notations as follows.

\textbf{Notations.} Recall that $\mathbf{j}_m$ denotes the column vector of size $m$ with all entries equal to one, and let $J_{m\times n}$ denotes the $m\times n$ matrix with all entries equal to one. In particular, if $m=n$, we simply write $J_{m\times m}$ by $J_{m}$.
Let $\mathbf{e}_i^{n}$ denotes the unit vector of size $n$ with the $i$-th entry equal to $1$. If the size $n$ of $\mathbf{e}_i^{n}$ can be easily read from the context, then we can omit the superscript and write $\mathbf{e}_i^{n}$ as $\mathbf{e}_i$ for simplicity.  Let $\ast^\top$ denotes the transpose of $\ast$, where $\ast$ may be a vector or a matrix.

Let $G$ be a graph with vertex set $V(G)=\{v_1, v_2,\ldots,v_n\}$ and let $\overrightarrow{H}=(H_1,H_2,\ldots, H_n)$ be an $n$-tuple of graphs.
Formally, the vertex set of vertex complemented corona $G\tilde{\circ}\overrightarrow{H}$ can be labeled as follows:
\begin{equation*}
V(G\tilde{\circ}\overrightarrow{H}) =\left\{(v,0):v\in V(G)\right\}\cup\bigcup_{j=1}^{n}\left\{(v_j, w):v_j\in V(G), w\in V(H_j)\right\},
\end{equation*}
and the adjacency relation
\begin{equation*}
	\label{relation}
(v_i,w)\sim(v_j,w') \Longleftrightarrow \left\{
           \begin{array}{lr}
     w=w'=0\text{~and~}  v_i\sim  v_j \text{~in~} G, & \text{or}\\[0.2cm]
    v_i=v_j\text{~and~} w\sim w'  \text{~in~} {H_l},  &   \text{or}\\[0.2cm]
   v_i\neq v_j\text{~and~just~one~of~}w\text{~and~}w'\text{~is~}0 .& \end{array}\right.
\end{equation*}

If $G$ is a regular connected graph and $\overrightarrow{H}=(H_1,H_2,\ldots, H_n)$ is an $n$-tuple of regular graphs with $|V(H_i)|=m\geq1$ for $i=1,2,\ldots, n$, then we compute the eigenvalues of $G\tilde{\circ}\overrightarrow{H}$ in the following theorem.

\begin{theorem}\label{eigenprojector}
Let $G$ be an $r$-regular connected graph with $n\ge2$ vertices and let $\overrightarrow{H}=(H_1, H_2, \ldots, H_n)$ be an $n$-tuple of $k$-regular graphs with $|V(H_i)|=m\geq1$, $i=1,2,\ldots, n$. Suppose that $G$ has eigenvalues $r=\lambda_0>\lambda_1>\cdots>\lambda_p$ with multiplicities $1=s_0, s_1, \ldots, s_p$. Then the eigenvalues of $G\tilde{\circ}\overrightarrow{H}$ are
\begin{itemize}
    \item[\rm (a)]$k$ with multiplicity $\left(\sum\limits_{i=1}^n s^i_{k}\right)-n$, where $s^i_{k}$ denotes the multiplicity of  eigenvalue $k$ of $H_i$;
    \item[\rm (b)]$\mu$ with multiplicity $\sum\limits_{i=1}^n s^i_{\mu}$, where $\mu$ is an eigenvalue of $H_i$ with multiplicity $s^i_\mu$, which covers all eigenvalues of $H_i$ except for $\mu=k$, for $i=1,2,\ldots, n$;
    \item[\rm(c)]$\frac{1}{2}\left(\lambda_j+k\pm\sqrt{(\lambda_j-k)^2+4m}\right)$ with multiplicity $s_j$, for $j=1,2,\ldots, p$;
    \item[\rm(d)]$\frac{1}{2}\left(r+k\pm\sqrt{(r-k)^2+4m(n-1)^2}\right)$ with multiplicity $1$.
\end{itemize}

\end{theorem}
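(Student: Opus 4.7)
The plan is to construct explicit eigenvectors of $G\tilde{\circ}\overrightarrow{H}$ in four families matching items (a)--(d), and then verify completeness by a dimension count. Order the vertices so that $V(G)$ comes first, followed by $V(H_1),\dots,V(H_n)$; in this ordering the adjacency matrix has the block form
$$
A=\begin{pmatrix} A_G & B \\ B^\top & \bigoplus_{i=1}^n A_{H_i}\end{pmatrix},
$$
where the $V(G)\times V(H_i)$ block of $B$ is $(\mathbf{j}_n-\mathbf{e}_i)\mathbf{j}_m^\top$, so that $B\mathbf{e}_{(i,w)}=\mathbf{j}_n-\mathbf{e}_i$ for every $w\in V(H_i)$.

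For items (a) and (b), for each $i$ and each eigenvalue $\mu$ of $H_i$ I would pick an orthonormal basis of the $\mu$-eigenspace, discarding the vector $\mathbf{j}_m/\sqrt{m}$ when $\mu=k$; every retained basis vector $\mathbf{z}$ then satisfies $\mathbf{j}_m^\top\mathbf{z}=0$. Extend $\mathbf{z}$ to a vector $\tilde{\mathbf{z}}$ on $V(G\tilde{\circ}\overrightarrow{H})$ by placing $\mathbf{z}$ on the $V(H_i)$-block and zero elsewhere. Then $A\tilde{\mathbf{z}}$ has zero $V(G)$-part (since $B(\tilde{\mathbf{z}}|_{V(H_i)})=(\mathbf{j}_n-\mathbf{e}_i)\mathbf{j}_m^\top\mathbf{z}=\mathbf{0}$), zero $V(H_j)$-part for $j\neq i$, and $A_{H_i}\mathbf{z}=\mu\mathbf{z}$ on $V(H_i)$; hence $\tilde{\mathbf{z}}$ is an eigenvector of $A$ with eigenvalue $\mu$. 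Summing over $i$ and $\mu$ yields the multiplicities claimed in (a) and (b).

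For items (c) and (d), I would use the ansatz $v=(\alpha\mathbf{y},\,\beta\,\mathbf{y}\otimes\mathbf{j}_m)$, where $\mathbf{y}$ is an eigenvector of $G$ for $\lambda_j$. Entry-wise computation gives $(Av)_{v_l}=\alpha\lambda_j y_l+\beta m\bigl(\sum_{l'}y_{l'}-y_l\bigr)$ on $V(G)$ and $(Av)_{(i,w)}=\alpha\bigl(\sum_{l'}y_{l'}-y_i\bigr)+\beta k y_i$ on $V(H_i)$. For $j\geq 1$, regularity of $G$ forces $\mathbf{j}_n^\top\mathbf{y}=0$, so these collapse to $(\alpha\lambda_j-\beta m)y_l$ and $(\beta k-\alpha)y_i$ respectively; the eigenvalue equation $Av=\theta v$ then reduces to the $2\times 2$ system
$$
\begin{pmatrix}\lambda_j & -m\\ -1 & k\end{pmatrix}\begin{pmatrix}\alpha\\ \beta\end{pmatrix}=\theta\begin{pmatrix}\alpha\\ \beta\end{pmatrix},
$$
whose eigenvalues are $\tfrac{1}{2}\bigl(\lambda_j+k\pm\sqrt{(\lambda_j-k)^2+4m}\bigr)$, each contributing multiplicity $s_j$ as $\mathbf{y}$ ranges over an orthonormal basis of the $\lambda_j$-eigenspace, yielding (c). For $j=0$, take $\mathbf{y}=\mathbf{j}_n$; now $\sum_{l'}y_{l'}=n$ and the reduced system is
$$
\begin{pmatrix}r & m(n-1)\\ n-1 & k\end{pmatrix}\begin{pmatrix}\alpha\\ \beta\end{pmatrix}=\theta\begin{pmatrix}\alpha\\ \beta\end{pmatrix},
$$
with eigenvalues $\tfrac{1}{2}\bigl(r+k\pm\sqrt{(r-k)^2+4m(n-1)^2}\bigr)$, each of multiplicity $1$, giving (d).

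The three families live in pairwise complementary subspaces and are linearly independent, and the multiplicities sum to $\bigl(\sum_i s^i_k-n\bigr)+\sum_{i,\mu\neq k}s^i_\mu+2\sum_{j\geq 1}s_j+2=(nm-n)+2(n-1)+2=n+nm$, which matches the order of $A$, so all eigenvalues have been accounted for. The main obstacle is tracking signs and coefficients in the reduced $2\times 2$ systems: because the off-diagonal block $B$ contains $J_n-I_n$ rather than $J_n$, the combinatorial term $\beta m\sum_{l'\neq l}y_{l'}$ must be rewritten as $-\beta m y_l$ via the orthogonality $\mathbf{j}_n^\top\mathbf{y}=0$, with an analogous subtraction appearing in the $V(H_i)$-component; once this bookkeeping is set up, the rest is routine. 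An alternative route, which avoids building eigenvectors by hand, is to compute $\det(xI-A)$ directly via the Schur complement (Lemma \ref{suhur complemented}) together with the coronal identities (Lemmas \ref{gamma} and \ref{liuzhang1}); this yields the same factorization but supplies less information for any subsequent eigenprojector analysis.
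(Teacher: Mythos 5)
Your proposal is correct, but it takes a genuinely different route from the paper. The paper proves this theorem by computing the characteristic polynomial $\det(xI_{n+nm}-A_{G\tilde\circ\overrightarrow{H}})$ via the Schur complement (Lemma \ref{suhur complemented}) together with the coronal identities (Lemmas \ref{gamma} and \ref{liuzhang1}), reading off the eigenvalues and multiplicities from the resulting factorization; this is exactly the ``alternative route'' you mention in your last sentence. You instead build three explicit families of eigenvectors --- vectors supported on a single $H_i$-block and orthogonal to $\mathbf{j}_m$ there, and vectors of the form $(\alpha\mathbf{y},\beta\,\mathbf{y}\otimes\mathbf{j}_m)$ reduced to a $2\times 2$ system --- and then close the argument with a dimension count $(nm-n)+2(n-1)+2=n+nm$. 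Your computations check out: the $2\times2$ matrices $\bigl(\begin{smallmatrix}\lambda_j & -m\\ -1 & k\end{smallmatrix}\bigr)$ and $\bigl(\begin{smallmatrix}r & m(n-1)\\ n-1 & k\end{smallmatrix}\bigr)$ have the correct traces and determinants to produce the stated roots, and the discriminants are strictly positive since $m\geq1$, so each system genuinely contributes two independent $(\alpha,\beta)$ pairs. The trade-off is as you anticipate: the paper's determinant computation delivers the multiplicities directly as root multiplicities with no completeness argument needed, whereas your eigenvector construction requires the final dimension count but essentially proves Theorem \ref{theorem-eigenprojector} (the eigenprojectors) at the same time --- indeed your vectors $(\alpha\mathbf{y},\beta\,\mathbf{y}\otimes\mathbf{j}_m)$ are precisely the $\mathbf{Y}_\pm$ and $\mathbf{Z}_\pm$ the paper constructs separately in its next proof. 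So your approach is, if anything, more economical for the paper's overall program.
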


\begin{proof}
Define $M=J_n-I_n$. The adjacency matrix of $G\tilde\circ\overrightarrow{H}$ is given by
\begin{equation}\label{matrixGH}
	A_{G\tilde{\circ}\overrightarrow{H}}=
	\left(
	\begin{array}{cc}
	A_G& M\otimes \mathbf{j}_{m}^\top\\ [0.2cm]
	M^\top\otimes \mathbf{j}_{m}& \sum\limits_{i=1}^{n} \left(\mathbf{e}^n_i(\mathbf{e}^n_i)^\top\otimes A_{H_i}\right)
	\end{array}
	\right),
	\end{equation}
where $\otimes$ means the Kronecker product. By Lemma  \ref{suhur complemented}, the characteristic polynomial of $A_{G\tilde{\circ}\overrightarrow{H}}$ is
\begin{equation*}
\begin{aligned}
\det(xI_{n+nm}-A_{G\tilde\circ\overrightarrow{H}})&=\det\left(
	\begin{array}{cc}
	xI_{n}-A_G& -M\otimes \mathbf{j}_{m}^\top\\ [0.2cm]
	-M^\top\otimes \mathbf{j}_{m}& \sum\limits_{i=1}^{n} \left(\mathbf{e}^n_i(\mathbf{e}^n_i)^\top\otimes (xI_{m}-A_{H_i})\right)
	\end{array}
	\right)\\
&=\det(N)\det(S).
\end{aligned}
\end{equation*}
where
$$
N=\sum_{i=1}^{n}\left(\mathbf{e}^n_i(\mathbf{e}^n_i)^\top\otimes (xI_{m}-A_{H_i})\right),
$$
and
$$
S=xI_{n}-A_G-(M\otimes \mathbf{j}_{m}^\top)N^{-1}(M^\top\otimes \mathbf{j}_{m}).
$$
By Lemma \ref{gamma}, we have
$$
(M\otimes \mathbf{j}_{m}^\top)N^{-1}(M^\top\otimes \mathbf{j}_{m})=\frac{m}{x-k}MM^\top=\frac{m}{x-k}(I_n+(n-2)J_n).
$$
Then by Lemmas \ref{gamma} and \ref{liuzhang1}, we have
\begin{align*}
\det(S)&=\det\left(\left(x-\frac{m}{x-k}\right)I_n-A_G-\frac{m(n-2)}{x-k}J_n\right)\\
&=\left(1-\frac{m(n-2)}{x-k}\Gamma_{A_G}\left(x-\frac{m}{x-k}\right)\right)\cdot\det\left(\left(x-\frac{m}{x-k}\right)I_n-A_G\right)\\
&=(x-k)^{-n}\left(1-\frac{m(n-2)}{x-k}\cdot\frac{n}{x-\frac{m}{x-k}-r}\right)\cdot\det\left(\left(x(x-k)-m\right)I_n-(x-k)A_G\right)\\
&=(x-k)^{-n}\cdot\frac{(x-r)(x-k)-m-mn(n-2)}{(x-r)(x-k)-m}
\cdot\prod_{j=0}^{p}(x(x-k)-m-(x-k)\lambda_j)^{s_j}\\
&=(x-k)^{-n}\cdot\left((x-r)(x-k)-m(n-1)^2\right) \cdot\prod_{j=1}^{p}(x^2-(k+\lambda_j)x-m+k\lambda_j)^{s_j}.
\end{align*}
Note that
\begin{align*}
\det(N)=&\prod_{i=1}^{n} \det(xI_m-A _{H_i}).
\end{align*}
Therefore, the required result follows from  $\det(xI_{n+nm}-A_{G\tilde\circ\overrightarrow{H}})
=\det(N)\det(S)$.

This completes the proof.
\qed
\end{proof}

Next, by Theorem \ref{eigenprojector}, we compute the eigenprojectors of $G\tilde\circ\overrightarrow{H}$, where $G$ and $\overrightarrow{H}$ are as in Theorem \ref{eigenprojector}.

\begin{theorem}\label{theorem-eigenprojector}
Let $G$ and $\overrightarrow{H}$ be as in Theorem \ref{eigenprojector}. Then the eigenprojectors  of $G\tilde\circ\overrightarrow{H}$ are stated as follows:
\begin{itemize}
 \item[\rm (a)]
 $\mu$ is an eigenvalue of $G\tilde\circ\overrightarrow{H}$ with the eigenprojector
 \begin{align}\label{E1}
	E_{\mu}=
	\left(\begin{array}{cc}
	\mathbf{0}& \mathbf{0} \\ [0.1cm]
	\mathbf{0}& \sum\limits_{l=1}^{n} \left(\mathbf{e}^n_l(\mathbf{e}^n_l)^\top\right) \otimes \left(E_\mu(H_l)-\delta_{\mu, k}\cdot\frac{1}{m}J_{m}\right)
\end{array}
\right),
	\end{align}
where $E_\mu(H_l)$ denotes the eigenprojector corresponding to the eigenvalue $\mu$ of $H_l$ with the assumption that $E_\mu(H_l)=0$ if $\mu$ is not an eigenvalue of $H_l$, and $\delta_{\mu,k}$ is a function satisfying that
$$
\delta_{\mu,k}=\left\{ \begin{array}{ll}
                       1, & \mu=k, \\[0.2cm]
                       0, & \mu\neq k.
                     \end{array}
\right.
$$
Note that the case of $\mu=k$ occurs if and only if $H_l$ is disconnected.
\item[\rm (b)]
 For each eigenvalue $\lambda\neq r$ of $G$, $\lambda_\pm=\frac{1}{2}\left(\lambda+k\pm\sqrt{(\lambda-k)^2+4m}\right)$ are eigenvalues of $G\tilde\circ\overrightarrow{H}$ with the eigenprojectors
 \begin{align}\label{E2} E_{\lambda_\pm}=\frac{(\lambda_{\pm}-k)^2}{(\lambda_{\pm}-k)^2+m}
 \left(\begin{array}{cc}
               E_{\lambda}(G) & -\frac{1}{\lambda_{\pm}-k}E_{\lambda}(G) \otimes \mathbf{j}^\top_{m} \\
               -\frac{1}{\lambda_{\pm}-k} (E_{\lambda}(G))^\top\otimes \mathbf{j}_{m}& \frac{1}{(\lambda_{\pm}-k)^2}E_{\lambda}(G) \otimes J_{m}
             \end{array}
\right),
	\end{align}
where $M=J_n-I_n$ and $E_{\lambda}(G)$ denotes the eigenprojector corresponding to eigenvalue $\lambda$ of $G$.

\item[\rm (c)]$r_\pm=\frac{1}{2}\left(r+k\pm\sqrt{(r-k)^2+4m(n-1)^2}\right)$ are eigenvalues of $G\tilde\circ\overrightarrow{H}$ with the eigenprojectors
\begin{align}\label{E3}
E_{r_\pm}=\frac{(r_{\pm}-k)^2}{(r_{\pm}-k)^2+m(n-1)^2}
\left(\begin{array}{cc}
               E_r(G) & \frac{n-1}{r_{\pm}-k} E_r(G)\otimes \mathbf{j}^\top_{m} \\[0.25cm]
               \frac{n-1}{r_{\pm}-k} (E_r(G))^\top\otimes \mathbf{j}_{m}& \frac{(n-1)^2}{(r_{\pm}-k)^2} E_r(G)\otimes J_m
             \end{array}
\right).
	\end{align}
\end{itemize}

Therefore, the spectral decomposition of $A_{G\tilde\circ\overrightarrow{H}}$ is given by
\begin{equation}\label{Spec}
	A_{G\tilde\circ\overrightarrow{H}} = \left(\sum\limits_{\lambda\in \mathrm{Spec}_G }\sum\limits_{\pm}\lambda_\pm E_{\lambda_\pm}\right)+\sum\limits_\mu\mu E_\mu,
	\end{equation}
where $\mu$ covers all eigenvalues of $H_l$, $l=1,2,\ldots, n$.
\end{theorem}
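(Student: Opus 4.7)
The strategy is to construct explicit eigenvectors in each eigenspace of $A_{G\tilde\circ\overrightarrow{H}}$ and then assemble them into orthogonal projectors, exploiting the block form \eqref{matrixGH}. Two natural families of trial vectors present themselves: those with a vanishing top block, which will produce the eigenvalues in part~(a); and those in which the bottom block has the form $y\otimes\mathbf{j}_m$, which will produce parts~(b) and (c). These two families are manifestly orthogonal, since the bottom block of a type~(b) or (c) vector lies in $\mathrm{span}\{\mathbf{e}_l\otimes\mathbf{j}_m:l=1,\ldots,n\}$ while the bottom block of a type~(a) vector is orthogonal to $\mathbf{j}_m$ in every fiber. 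Moreover, the multiplicities from Theorem~\ref{eigenprojector} sum to $n+nm$, so once each eigenspace is accounted for, completeness $\sum E=I_{n+nm}$ follows automatically and the spectral decomposition \eqref{Spec} is immediate.

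For part~(a), fix $l\in\{1,\ldots,n\}$ and let $y$ be an eigenvector of $A_{H_l}$ orthogonal to $\mathbf{j}_m$. A direct block multiplication shows $(0,\mathbf{e}_l\otimes y)^\top$ is an eigenvector of $A_{G\tilde\circ\overrightarrow{H}}$ with eigenvalue $\mu$, since $(M\otimes\mathbf{j}_m^\top)(\mathbf{e}_l\otimes y)=(M\mathbf{e}_l)\otimes(\mathbf{j}_m^\top y)=0$. The orthogonal projector onto the span of all such vectors is block-diagonal with $l$th block the restriction of $E_\mu(H_l)$ to $\mathbf{j}_m^\perp$. Since $H_l$ is $k$-regular, $\mathbf{j}_m$ is itself a $k$-eigenvector of $A_{H_l}$; hence if $\mu\neq k$ every $\mu$-eigenvector is automatically orthogonal to $\mathbf{j}_m$ and the $l$th block is simply $E_\mu(H_l)$, while if $\mu=k$ exactly one dimension of $E_k(H_l)$ coincides with $\mathrm{span}\{\mathbf{j}_m\}$, so subtracting $\frac{1}{m}J_m$ extracts the orthogonal complement. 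Summing over $l$ reproduces \eqref{E1}.

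For parts~(b) and (c), I try $v=(x,\alpha\, x\otimes\mathbf{j}_m)^\top$ with $x$ an eigenvector of $A_G$. A block computation using $(M\otimes\mathbf{j}_m^\top)(x\otimes\mathbf{j}_m)=m\,Mx$ and $\sum_i(\mathbf{e}_i\mathbf{e}_i^\top\otimes A_{H_i})(x\otimes\mathbf{j}_m)=k\,x\otimes\mathbf{j}_m$ reduces the eigenvalue equation $A_{G\tilde\circ\overrightarrow{H}}v=\theta v$ to the pair $A_Gx+m\alpha Mx=\theta x$ and $Mx=\alpha(\theta-k)x$. When $\lambda\neq r$ one has $Mx=-x$, forcing $\alpha=-1/(\theta-k)$ and the quadratic $\theta^2-(\lambda+k)\theta+\lambda k-m=0$ with roots $\lambda_\pm$; when $\lambda=r$ one takes $x=\mathbf{j}_n$ with $Mx=(n-1)\mathbf{j}_n$, forcing $\alpha=(n-1)/(\theta-k)$ and the quadratic with roots $r_\pm$. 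Computing $\|v\|^2=\|x\|^2(1+m\alpha^2)$ yields the normalizing scalars $(\theta-k)^2/((\theta-k)^2+m)$ and $(\theta-k)^2/((\theta-k)^2+m(n-1)^2)$ for the two cases; summing $vv^\top/\|v\|^2$ over an orthonormal basis of the $\lambda$-eigenspace of $G$ replaces the rank-one factor $xx^\top$ with $E_\lambda(G)$ (respectively $E_r(G)$), producing exactly \eqref{E2} and \eqref{E3}. The main obstacle is the careful bookkeeping in part~(a) when $\mu=k$, i.e.\ when some $H_l$ is disconnected and the $k$-eigenspace of $H_l$ has dimension greater than one; beyond that, everything reduces to routine block algebra.
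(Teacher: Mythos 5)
Your proposal is correct and follows essentially the same route as the paper: exhibit the two families of eigenvectors $(0,\mathbf{e}_l\otimes y)^\top$ and $(x,\alpha\,x\otimes\mathbf{j}_m)^\top$, verify the eigenvalue equations by block multiplication, compute the norms $1+m\alpha^2$ to get the normalizing scalars, and sum the outer products over orthonormal bases to obtain \eqref{E1}--\eqref{E3}. Your dimension-count argument for completeness and your handling of the $\mu=k$ case via $E_k(H_l)-\frac{1}{m}J_m$ match the paper's treatment.
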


\begin{proof}
The proofs of (a)--(c) consist of Claims 1--3.

\smallskip

\begin{claim}
{\em
$\mathbf{X}$, $\mathbf{Y}_\pm$ and $\mathbf{Z}_\pm$ defined below
 are eigenvectors of $A_{G\tilde\circ\overrightarrow{H}}$ corresponding to eigenvalues $\mu$, $\lambda_\pm$ and $r_\pm$, respectively.
 }
 \end{claim}

\noindent\textbf{Proof of Claim 1.}
Let $H_l$ be a graph in $\overrightarrow{H}$. Note that $k$ is always an eigenvalue of $H_l$ with an eigenvector $\mathbf{j}_m$. Note also that $E_k(H_l)=\frac{1}{m}J_m$ if and only if $H_l$ is connected. Suppose that $\mathbf{x}\bot\mathbf{ j}_m$ is an eigenvector of $A_{H_l}$ corresponding to the eigenvalue $\mu$ of $H_l$ (Here, $\mu$ may be equal to $k$, and the case of $\mu=k$ occurs if and only if $H_l$ is disconnected). Define
$$
\mathbf{X}:=\left(\begin{array}{cc}
\mathbf{0}_{n\times1}\\ [0.2cm]
\mathbf{e}_l^n\otimes \mathbf{x}
\end{array}\right),
$$
where $\mathbf{0}_{s\times t}$ denotes the $s\times t$ matrix with all entries equal to $0$. Notice that the adjacency matrix $A_{G\tilde\circ\overrightarrow{H}}$ is given in (\ref{matrixGH}). Then, we have
\begin{equation}\label{mueigen}
A_{G\tilde\circ\overrightarrow{H}}\mathbf{X}=\mu\mathbf{X}.
\end{equation}
Thus, $\mathbf{X}$ is an eigenvector of $A_{G\tilde\circ\overrightarrow{H}}$ with the eigenvalue $\mu$.

Suppose that $\mathbf{y}\perp \mathbf{j}_{n}$ is a unit eigenvector of $A_{G}$  corresponding to the eigenvalue $\lambda\neq r$.
Define
$$
\mathbf{Y}_\pm:=\left(\begin{array}{cc}
 \mathbf{y}\\[0.2cm]
 -\frac{1}{\lambda_\pm-k}\mathbf{y}\otimes \mathbf{j}_m
 \end{array}
 \right).
 $$
 Note that $M\mathbf{y}=M^\top\mathbf{y}=-\mathbf{y}$, and keep in mind that $\mathbf{y}$ can be regarded as $\mathbf{y}\otimes 1$. Then if $\lambda\neq r$, by (\ref{matrixGH}), we have
\begin{align*}
 A_{G\tilde\circ\overrightarrow{H}}\mathbf{Y}_\pm
&=\left(
\begin{array}{cc}
A_G& M\otimes \mathbf{j}_{m}^\top\\ [0.2cm]
	M^\top\otimes \mathbf{j}_{m}& \sum\limits_{i=1}^{n} \left(\mathbf{e}^n_i(\mathbf{e}^n_i)^\top\otimes A_{H_i}\right)
\end{array}\right)
\left(\begin{array}{c}
 \mathbf{y}\\[0.25cm]
 -\frac{1}{\lambda_\pm-k}\mathbf{y}\otimes \mathbf{j}_m
\end{array}\right)\\
&=\left(\begin{array}{c}
\lambda\mathbf{y}-\frac{1}{\lambda_\pm-k}(M\mathbf{y}) \otimes\mathbf{j}_m^\top\mathbf{j}_m\\[0.25cm]
(M^\top\mathbf{y})\otimes\mathbf{j}_m- \frac{k}{\lambda_\pm-k}\mathbf{y}\otimes\mathbf{j}_m
\end{array}\right)\\
&=\left(\begin{array}{c}\lambda \mathbf{y}+\frac{m}{\lambda_\pm-k}\mathbf{y}\\[0.25cm]
-\frac{\lambda_\pm}{\lambda_\pm-k}\mathbf{y}\otimes\mathbf{j}_m
\end{array}\right)\\
&=\lambda_\pm\mathbf{Y}_\pm.
\end{align*}
Thus, $\mathbf{Y}_\pm$ are eigenvectors of $A_{G\tilde\circ\overrightarrow{H}}$ with eigenvalues $\lambda_\pm$.

Let $\mathbf{z}=\frac{1}{\sqrt{n}}\mathbf{j}_n$. Note that $A_{G}\mathbf{z}=r\mathbf{z}$. Define
$$
\mathbf{Z}_\pm:=\left(\begin{array}{cc}
 \mathbf{z}\\[0.2cm]
 \frac{n-1}{r_\pm-k}\mathbf{z}\otimes \mathbf{j}_m
 \end{array}
 \right).
 $$
 Note that $M\mathbf{z}=M^\top\mathbf{z}=(n-1)\mathbf{z}$, and keeping in mind that $\mathbf{z}$ can be regarded as $\mathbf{z}\otimes 1$, by (\ref{matrixGH}), we have
\begin{align*}
A_{G\tilde\circ\overrightarrow{H}}\mathbf{Z}_\pm
&=\left(\begin{array}{c}
r\mathbf{z}+\frac{n-1}{r_\pm-k}(M\mathbf{z}) \otimes\mathbf{j}_m^\top\mathbf{j}_m\\[0.25cm]
(M^\top\mathbf{z})\otimes\mathbf{j}_m+ \frac{k(n-1)}{r_\pm-k}\mathbf{z}\otimes\mathbf{j}_m
\end{array}\right)\\
&=\left(\begin{array}{c}
r \mathbf{z}+\frac{m(n-1)^2}{r_\pm-k}\mathbf{z}\\[0.25cm]
\frac{r_\pm(n-1)}{r_\pm-k}\mathbf{z}\otimes\mathbf{j}_m
\end{array}\right)\\
&=r_\pm\mathbf{Z}_\pm.
\end{align*}
Thus, $\mathbf{Z}_\pm$ are eigenvectors of $A_{G\tilde\circ\overrightarrow{H}}$ with eigenvalues $r_\pm$.

\begin{claim}
{\em All $\mathbf{X}$'s, $\mathbf{Y}_\pm$'s and $\mathbf{Z}_\pm$'s are orthogonal eigenvectors of  $A_{G\tilde\circ\overrightarrow{H}}$.}
\end{claim}

\noindent\textbf{Proof of Claim 2.}
Recall that $\mathbf{x}\bot\mathbf{j}_m$, $\mathbf{y}\bot\mathbf{j}_n$ and $\mathbf{z}=\frac{1}{\sqrt{n}}\mathbf{j}_n$. Then one can easily verify that $\mathbf{X}\bot\mathbf{Y}_\pm$, $\mathbf{X}\bot\mathbf{Z}_\pm$ and $\mathbf{Y}_\pm \bot\mathbf{Z}_\pm$.

Consider $\mathbf{X}=\left(\begin{array}{cc}
\mathbf{0}_{n\times1}\\ [0.2cm]
\mathbf{e}_l^n\otimes \mathbf{x}
\end{array}\right)$
 and $\mathbf{X}'=\left(\begin{array}{cc}
\mathbf{0}_{n\times1}\\ [0.2cm]
\mathbf{e}_l^n\otimes \mathbf{x}'
\end{array}\right)$, where $\mathbf{x}$ and $\mathbf{x}'$ are orthogonal eigenvectors in $H_l$. Clearly, $\mathbf{X}\bot\mathbf{X}'$.

Consider $\mathbf{Y}_\pm=\left(\begin{array}{cc}
 \mathbf{y}\\[0.25cm]
 -\frac{1}{\lambda_\pm-k} \mathbf{y}\otimes \mathbf{j}_m
 \end{array}
 \right)$
 and $\mathbf{Y}'_\pm=\left(\begin{array}{cc}
 \mathbf{y}'\\[0.25cm]
 -\frac{1}{\lambda'_\pm-k} \mathbf{y}'\otimes \mathbf{j}_m
 \end{array}
 \right)$, where $\mathbf{y}$ and $\mathbf{y}'$ are unit orthogonal eigenvectors of $A_G$ corresponding to $\lambda\not=r$ and $\lambda'\not=r$ (Here, $\lambda$ and $\lambda'$ may be equal). Note that $\mathbf{y}\perp \mathbf{j}_{n}$, $\mathbf{y}'\perp \mathbf{j}_{n}$ and $\mathbf{y}\perp \mathbf{y}'$. Thus,
\begin{align*}
(\mathbf{Y}_{\pm})^\top\mathbf{Y}'_{\pm} &=\mathbf{y}^\top\mathbf{y}'+
\frac{\mathbf{y}^\top\mathbf{y}'\otimes \mathbf{j}^\top_{m}\mathbf{j}_{m}}{(\lambda_\pm-k)(\lambda'_\pm-k)}=0,
\end{align*}
that is, $\mathbf{Y}_\pm\bot\mathbf{Y}'_\pm$.

Consider $\mathbf{Y}_+$ and $\mathbf{Y}_-$. Recall that $\mathbf{y}\bot\mathbf{j}_n$. Note that
\begin{align*}
(\lambda_+-k)(\lambda_--k)=-m.
\end{align*}
Thus,
\begin{align*}
\mathbf{Y}_+\mathbf{Y}_-=\mathbf{y}^\top\mathbf{y}+
\frac{\mathbf{y}^\top\mathbf{y}\otimes \mathbf{j}^\top_{m}\mathbf{j}_{m}}{(\lambda_+-k)(\lambda_--k)}=1-\frac{m}{m}=0,
\end{align*}
that is, $\mathbf{Y}_+\bot\mathbf{Y}_-$.

Consider $\mathbf{Z}_+$ and $\mathbf{Z}_-$. Recall that $\mathbf{z}=\frac{1}{\sqrt{n}}\mathbf{j}_n$. Note that
\begin{align*}
(r_+-k)(r_--k)=-m(n-1)^2.
\end{align*}
Thus,
\begin{align*}
\mathbf{Z}_+\mathbf{Z}_-=\mathbf{z}^\top\mathbf{z}+
\frac{(n-1)^2\mathbf{z}^\top\mathbf{z}\otimes \mathbf{j}^\top_{m}\mathbf{j}_{m}}{(r_+-k)(r_--k)} =1-\frac{m(n-1)^2}{m(n-1)^2} =0,
\end{align*}
that is, $\mathbf{Z}_+\bot\mathbf{Z}_-$.

\begin{claim}
{\em
(\ref{E1}), (\ref{E2}) and (\ref{E3}) are eigenprojectors of $A_{G\tilde\circ\overrightarrow{H}}$ corresponding to eigenvalues $\mu$, $\lambda_\pm$ and $r_\pm$, respectively.
}
\end{claim}

\noindent\textbf{Proof of Claim 3.}
By (\ref{mueigen}), one can easily verify that (\ref{E1}) is the eigenprojector corresponding to the eigenvalue $\mu$.

Suppose that $\left\{\mathbf{y}^{(1)}, \mathbf{y}^{(2)}, \ldots,\mathbf{y}^{(s)}\right\}$ is a unit orthonormal basis of the eigenspace of $G$ corresponding to the eigenvalue $\lambda\not=r$. Set
\begin{align*}
\mathbf{Y}^{(i)}_{\pm}:=\left(\begin{array}{c}
                                                                 \mathbf{y}^{(i)}\\[0.25cm]
                                                                 -\frac{1}{\lambda_{\pm}-k}\mathbf{y}^{(i)} \otimes \mathbf{j}_{m}
                                                               \end{array}
\right).
\end{align*}
Then
\begin{equation*}
   \left\|\mathbf{Y}^{(i)}_{\pm}\right\|^2
 =1+\frac{m}{(\lambda_{\pm}-k)^2}.
 \end{equation*}
Let $E_{\lambda}(G)= \sum\limits_{i=1}^s\left(\mathbf{y}^{(i)}\right)\left(\mathbf{y}^{(i)}\right)^\top$ be the eigenprojector of $G$ corresponding to the eigenvalue $\lambda\not=r$. Then eigenprojectors of $A_{G\tilde\circ\overrightarrow{H}}$ corresponding to $\lambda_{\pm}$ are given as follows:
\begin{align*}
&E_{\lambda_{\pm}}(G\tilde\circ\overrightarrow{H}) \\ =&\frac{(\lambda_{\pm}-k)^2}{(\lambda_{\pm}-k)^2+m} \cdot \sum_{i=1}^{s} \mathbf{Y}^{(i)}_{\pm} \left(\mathbf{Y}^{(i)}_{\pm}\right)^\top\\
=&\frac{(\lambda_{\pm}-k)^2}{(\lambda_{\pm}-k)^2+m}
 \left(\begin{array}{cc}
               E_{\lambda}(G) & -\frac{1}{\lambda_{\pm}-k}E_{\lambda}(G) \otimes \mathbf{j}^\top_{m} \\
               -\frac{1}{\lambda_{\pm}-k} (E_{\lambda}(G))^\top\otimes \mathbf{j}_{m}& \frac{1}{(\lambda_{\pm}-k)^2}E_{\lambda}(G) \otimes J_{m}
             \end{array}
\right) ,
\end{align*}
yielding (\ref{E2}).

Since
\begin{equation*}
   \left\|\mathbf{Z}_{\pm}\right\|^2
 =1+\frac{m(n-1)^2}{(r_{\pm}-k)^2}.
 \end{equation*}
 Then eigenprojectors  of $A_{G\tilde\circ\overrightarrow{H}}$ corresponding to $r_{\pm}$  are given as follows:
\begin{align*}
&E_{r_{\pm}}(G\tilde\circ\overrightarrow{H}) \\ =&\frac{(r_{\pm}-k)^2}{(r_{\pm}-k)^2+m(n-1)^2} \mathbf{Z}_{\pm} \left(\mathbf{Z}_{\pm}\right)^\top\\
=&\frac{(r_{\pm}-k)^2}{(r_{\pm}-k)^2+m(n-1)^2}
\left(\begin{array}{cc}
               E_r(G) & \frac{n-1}{r_{\pm}-k} E_r(G)\otimes \mathbf{j}^\top_{m} \\[0.25cm]
               \frac{n-1}{r_{\pm}-k} (E_r(G))^\top\otimes \mathbf{j}_{m}& \frac{(n-1)^2}{(r_{\pm}-k)^2} E_r(G)\otimes J_m
             \end{array}
\right),
\end{align*}
yielding (\ref{E3}).

At last, it is easy to verify that (\ref{Spec}) is  the spectral decomposition of $A_{G\tilde\circ\overrightarrow{H}}$.

This completes the proof.
\qed\end{proof}

\section{State transfers in vertex complemented coronas}
\subsection{PST in vertex complemented coronas}

In this section, we prove that PST in vertex complemented coronas is extremely rare. In order to prove such a result, Lemma \ref{periodic-pst} implies that we just need to verify there is no periodic vertex in vertex complemented coronas.

\begin{lemma}\label{vw-v0}
Let $G$ and $\overrightarrow{H}$ be as in Theorem \ref{eigenprojector}. If $(v,w)$ is a periodic vertex of $G\tilde\circ\overrightarrow{H}$, then $(v,0)$ a periodic vertex of $G\tilde\circ\overrightarrow{H}$.
\end{lemma}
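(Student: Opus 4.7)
The plan is to invoke Lemma \ref{condition}: a vertex is periodic if and only if every eigenvalue in its eigenvalue support satisfies one of two arithmetic conditions (all integers, or all of the form $\frac{1}{2}(a+b_\lambda\sqrt{\Delta})$ for a fixed integer $a$ and a fixed square-free $\Delta$). Granted the containment
\begin{equation*}
\mathrm{supp}_{G\tilde\circ\overrightarrow{H}}(v,0)\ \subseteq\ \mathrm{supp}_{G\tilde\circ\overrightarrow{H}}(v,w),
\end{equation*}
the arithmetic condition inherited from the periodicity of $(v,w)$ restricts to the smaller set $\mathrm{supp}_{G\tilde\circ\overrightarrow{H}}(v,0)$, and Lemma \ref{condition} read in the reverse direction then forces $(v,0)$ to be periodic, which is the desired conclusion.

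To establish this containment I would inspect how each of the three families of eigenprojectors (a)--(c) produced in Theorem \ref{theorem-eigenprojector} acts on the vectors $\mathbf{e}_{(v,0)}$ and $\mathbf{e}_{(v,w)}$. Writing $v=v_j$, using the Kronecker identity $(A\otimes B)(x\otimes y)=Ax\otimes By$, the symmetry of $E_\lambda(G)$, and the fact that $E_r(G)=\frac{1}{n}J_n$, a short direct calculation yields: (i) for $E_\mu$ of family (a) the top block is zero, so $E_\mu\mathbf{e}_{(v,0)}=\mathbf{0}$; (ii) for $E_{\lambda_\pm}$ of family (b) (with $\lambda\neq r$), both $E_{\lambda_\pm}\mathbf{e}_{(v,0)}$ and $E_{\lambda_\pm}\mathbf{e}_{(v,w)}$ are nonzero exactly when $E_\lambda(G)\mathbf{e}_j^n\neq \mathbf{0}$, i.e.\ when $\lambda\in\mathrm{supp}_G(v_j)$ (here one also checks quickly that $\lambda_\pm\neq k$ so the scalar factors do not vanish); (iii) for $E_{r_\pm}$ of family (c), $E_r(G)\mathbf{e}_j^n=\frac{1}{n}\mathbf{j}_n\neq\mathbf{0}$, so both actions are automatically nonzero. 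Hence, at the level of the bare projectors, every eigenvalue contributing to $\mathrm{supp}(v,0)$ also contributes to $\mathrm{supp}(v,w)$.

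The only remaining subtlety is that the eigenvalues listed in (a)--(d) of Theorem \ref{eigenprojector} are not necessarily pairwise distinct: if two of them coincide, the true spectral projector of $G\tilde\circ\overrightarrow{H}$ for that shared eigenvalue is the sum of the corresponding blocks from Theorem \ref{theorem-eigenprojector}. This causes no trouble because the family-(a) piece $E_\mu$ always annihilates $\mathbf{e}_{(v,0)}$: whenever the combined projector is nonzero on $\mathbf{e}_{(v,0)}$, a family-(b) or family-(c) block must be nonzero there, and by step (ii) or step (iii) that same block is then also nonzero on $\mathbf{e}_{(v,w)}$. This yields the set containment above, and the application of Lemma \ref{condition} in each direction completes the proof. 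I expect the main (and still mild) obstacle to be exactly this bookkeeping for coincident eigenvalues; the Kronecker-product computations behind (i)--(iii) are otherwise routine.
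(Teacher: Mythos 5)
Your proposal is correct and follows essentially the same route as the paper: the paper's entire proof is the observation that, by Theorem \ref{theorem-eigenprojector}, $\mathrm{supp}_{G\tilde\circ\overrightarrow{H}}((v,0))\subseteq \mathrm{supp}_{G\tilde\circ\overrightarrow{H}}((v,w))$, with the appeal to Lemma \ref{condition} left implicit. Your projector-by-projector verification of the containment, including the remark on coincident eigenvalues (where orthogonality of the constituent blocks prevents cancellation), simply supplies the details the paper omits.
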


\begin{proof}
By Theorem \ref{theorem-eigenprojector}, the eigenvalue support of $(v,0)$ is contained in the eigenvalue support of $(v,w)$.
\qed\end{proof}

Next we show a necessary and sufficient condition for periodicity in vertex complemented coronas.

\begin{lemma}\label{mainlemma}
Let $G$ and $\overrightarrow{H}$ be as in Theorem \ref{eigenprojector}, and let $\lambda_\pm$ with $\lambda\not=r$ and $r_\pm$ be as in Theorem \ref{theorem-eigenprojector}.
\begin{itemize}
 \item[\rm (a)] If $r\not=k$, then $(v,0)$ is a periodic vertex of $G\tilde\circ\overrightarrow{H}$ if and only if  for each eigenvalue $\lambda\in \mathrm{{supp}}_G(v)\setminus\{r\}$, all $\lambda-k$, $\sqrt{(\lambda-k)^2+4m}$ and $\sqrt{(r-k)^2+4m(n-1)^2}$ are integers.

\item[\rm (b)] If $r=k$, then $(v,0)$ is a periodic vertex of $G\tilde\circ\overrightarrow{H}$ if and only if  there exists a positive square-free integer $\Delta$ such that for each eigenvalue $\lambda\in \mathrm{{supp}}_G(v)\setminus\{r\}$, all $\lambda-k$, $\sqrt{(\lambda-k)^2+4m}$ and $\sqrt{4m(n-1)^2}$ are integer multiples of $\sqrt{\Delta}$. Moreover, if this holds, then $\Delta\mid m$.
\end{itemize}
\end{lemma}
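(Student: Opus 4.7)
The plan is to apply Lemma \ref{condition} to the vertex $(v,0)$. Inspecting the eigenprojectors from Theorem \ref{theorem-eigenprojector}, the $E_\mu$'s of type (\ref{E1}) vanish on the $G$-block, $E_{\lambda_\pm}$ contributes to $(v,0)$ precisely when $\lambda\in\mathrm{supp}_G(v)\setminus\{r\}$, and $E_{r_\pm}$ always contributes since $G$ is regular connected so that $E_r(G)=\tfrac{1}{n}J_n$. Hence
$$
\mathrm{supp}_{G\tilde\circ\overrightarrow{H}}(v,0)=\{\lambda_+,\lambda_-:\lambda\in\mathrm{supp}_G(v)\setminus\{r\}\}\cup\{r_+,r_-\},
$$
and for $n\ge 2$ the first set is nonempty since $E_r(G)\mathbf{e}_v=\tfrac{1}{n}\mathbf{j}_n\ne\mathbf{e}_v$. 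Throughout I will use the Vieta relations $\lambda_++\lambda_-=\lambda+k$, $\lambda_+-\lambda_-=\sqrt{(\lambda-k)^2+4m}$, $r_++r_-=r+k$, and $r_+-r_-=\sqrt{(r-k)^2+4m(n-1)^2}$.

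For part (a), where $r\ne k$, I would argue that case (b) of Lemma \ref{condition} with $\Delta>1$ cannot occur: if every support-eigenvalue had the common form $\tfrac12(a+b\sqrt\Delta)$, then taking rational parts in $\lambda_++\lambda_-$ and $r_++r_-$ would force $\lambda+k=a=r+k$, contradicting $\lambda\ne r$. Hence case (a) of Lemma \ref{condition} must apply, i.e., all $\lambda_\pm$ and $r_\pm$ are integers; by the Vieta relations this is equivalent to integrality of $\lambda-k$, $\sqrt{(\lambda-k)^2+4m}$, and $\sqrt{(r-k)^2+4m(n-1)^2}$. The converse is a parity check: since $(\lambda-k)^2$ and $(\lambda-k)^2+4m$ share parity, so do $\lambda-k$ and $\sqrt{(\lambda-k)^2+4m}$, forcing $\lambda+k\pm\sqrt{(\lambda-k)^2+4m}\in 2\mathbb{Z}$, so $\lambda_\pm\in\mathbb{Z}$; analogously for $r_\pm$.

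For part (b), where $r=k$ and $r_\pm=k\pm(n-1)\sqrt m$, the sufficient direction is direct substitution: if $\lambda-k=c_1\sqrt\Delta$, $\sqrt{(\lambda-k)^2+4m}=c_2\sqrt\Delta$, and $\sqrt{4m(n-1)^2}=c_3\sqrt\Delta$, then $\lambda_\pm=\tfrac12(2k+(c_1\pm c_2)\sqrt\Delta)$ and $r_\pm=\tfrac12(2k\pm c_3\sqrt\Delta)$, so Lemma \ref{condition} applies with $a=2k$; and $2(n-1)\sqrt m=c_3\sqrt\Delta$ with $\Delta$ square-free gives $m=t^2\Delta$ for some positive integer $t$, whence $\Delta\mid m$. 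For necessity, Lemma \ref{condition} furnishes $a$ and square-free $\Delta$ so that every support-eigenvalue equals $\tfrac12(a+b\sqrt\Delta)$; applied to $r_\pm$ this yields $a=2k$ and $\sqrt m=t\sqrt\Delta$ for some positive integer $t$. Writing $\lambda_\pm=\tfrac12(2k+b_\lambda^\pm\sqrt\Delta)$ gives $\lambda-k=\tfrac{b_\lambda^++b_\lambda^-}{2}\sqrt\Delta$ and $\sqrt{(\lambda-k)^2+4m}=\tfrac{b_\lambda^+-b_\lambda^-}{2}\sqrt\Delta$, with $b_\lambda^+b_\lambda^-=-4t^2$ from equating squares. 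The main obstacle is to upgrade these half-integer expressions to genuine integer multiples of $\sqrt\Delta$, i.e., to show $b_\lambda^\pm$ are both even; I would exploit that $\lambda$ is an algebraic integer lying in $\mathbb{Q}(\sqrt\Delta)$, so that the description of $\mathcal{O}_{\mathbb{Q}(\sqrt\Delta)}$ (by cases on $\Delta\bmod 4$) forces $b_\lambda^++b_\lambda^-$ to be even, and then $b_\lambda^+b_\lambda^-=-4t^2$ (even and nonzero for $m\ge 1$) precludes them from both being odd, so both must be even.
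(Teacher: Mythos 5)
Your reduction to Lemma \ref{condition}, your computation of $\mathrm{supp}_{G\tilde\circ\overrightarrow{H}}((v,0))$, and all of part (b) are sound. In part (b) your ring-of-integers argument for the parity upgrade (cases on $\Delta\bmod 4$, combined with $b_\lambda^+b_\lambda^-=-4t^2$) is a valid substitute for the paper's shorter observation that $(\lambda-k)^2$, $(\lambda-k)^2+4m$ and $4m(n-1)^2$ are rational algebraic integers, hence integers, which already forces the half-integer coefficients of $\sqrt\Delta$ to be even.

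The necessity direction of part (a), however, has a genuine gap. From $r_++r_-=r+k\in\mathbb{Z}$ you may legitimately conclude $a=r+k$ and $b_{r_+}+b_{r_-}=0$, but ``taking rational parts in $\lambda_++\lambda_-$'' only tells you that the rational part of $\lambda+k=a+\tfrac{1}{2}(b_{\lambda_+}+b_{\lambda_-})\sqrt{\Delta}$ equals $a$; to conclude $\lambda+k=a$ you need $b_{\lambda_+}+b_{\lambda_-}=0$, i.e., that $\lambda$ itself is rational, which is not given (eigenvalues in the support of a vertex of a regular connected graph can be irrational, e.g. $(-1\pm\sqrt{5})/2$ for $C_5$). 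Tellingly, your argument never uses $r\neq k$, and if it were valid it would equally rule out the scenario of part (b), where $r=k$ and irrational $\lambda$ with $\lambda-k$ a nonzero integer multiple of $\sqrt\Delta$ genuinely occur. The missing ingredient is the product relation $(\lambda_+-k)(\lambda_--k)=-m$ (and $(r_+-k)(r_--k)=-m(n-1)^2$): writing $\lambda_\pm=\tfrac{1}{2}(a+b_{\lambda_\pm}\sqrt\Delta)$ and comparing irrational parts gives $(a-2k)(b_{\lambda_+}+b_{\lambda_-})=0$ for every $\lambda$ in the support. If $a\neq 2k$, then $b_{\lambda_+}+b_{\lambda_-}=0$ for all $\lambda$, so every $\lambda+k=a$ and the support collapses to $\{r\}$, contradicting connectivity with $n\ge2$; if $a=2k$, then $r-k=\tfrac{1}{2}(b_{r_+}+b_{r_-})\sqrt\Delta$ is either $0$ or irrational, contradicting that $r-k$ is a nonzero integer. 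This is precisely where the hypothesis $r\neq k$ must enter, and it is exactly Case 2 of the paper's proof.
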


\begin{proof}
By Theorem \ref{theorem-eigenprojector}, the eigenvalue support of $(v,0)$ is given by $\mathrm{{supp}}_{G\tilde\circ\overrightarrow{H}}((v,0))=\{\lambda_\pm:\lambda\in \mathrm{{supp}}_G(v)\}$. Moreover, $r_\pm$ are always in $\mathrm{{supp}}_{G\tilde\circ\overrightarrow{H}}((v,0))$.

(a) For the sufficiency, for each eigenvalue $\lambda\in \mathrm{{supp}}_G(v)\setminus\{r\}$, all $\lambda-k$, $\sqrt{(\lambda-k)^2+4m}$ and $\sqrt{(r-k)^2+4m(n-1)^2}$ are integers. Clearly, $\lambda_\pm\in\mathrm{{supp}}_{G\tilde\circ\overrightarrow{H}}((v,0))$ and $r_\pm$ are integers. By Lemma \ref{condition}, $(v,0)$ is a periodic vertex.

For the necessity,  by Lemma \ref{condition}, we consider the following two cases.

\emph{Case 1.} All eigenvalues in $\mathrm{{supp}}_{G\tilde\circ\overrightarrow{H}}((v,0))$ are integers. In this case, $\lambda-k=\lambda_++\lambda_--2k~(\lambda\neq r)$, $\sqrt{(\lambda-k)^2+4m}=\lambda_+-\lambda_-~(\lambda\neq r)$ and $\sqrt{(r-k)^2+4m(n-1)^2}=r_+-r_-$ are integers.

\emph{Case 2.} There are integer $a$ and square-free integer $\Delta\ge2$ such that each eigenvalue $\lambda_\pm\in \mathrm{{supp}}_{G\tilde\circ\overrightarrow{H}}((v,0))$ is of the form $\lambda_\pm=\frac{1}{2}(a+b_{\lambda_\pm}\sqrt{\Delta})$, where $b_{\lambda_\pm}$ are integers corresponding to eigenvalues $\lambda_\pm$.
Recall that $(\lambda_+-k)(\lambda_--k)=-m$ for $\lambda\neq r$ and $(r_+-k)(r_--k)=-m(n-1)^2$. Then, in this case, we have
$$
-m=\frac{1}{4}\left((a-2k)^2+b_{\lambda_+}b_{\lambda_-}\Delta\right) +\frac{1}{4}(a-2k)(b_{\lambda_+}+b_{\lambda_-})\sqrt{\Delta},
$$
and
$$
-m(n-1)^2=\frac{1}{4}\left((a-2k)^2+b_{r_+}b_{r_-}\Delta\right) +\frac{1}{4}(a-2k)(b_{r_+}+b_{r_-})\sqrt{\Delta}.
$$
Note that $\sqrt{\Delta}$ is irrational. Then we have $a-2k=0$ or $b_{\lambda_+}+b_{\lambda_-}=0$ for each $\lambda\in\mathrm{{supp}}_G(v)$.

\emph{Case 2.1. } $b_{\lambda_+}+b_{\lambda_-}=0$ for each $\lambda\in\mathrm{{supp}}_G(v) $. In this case, we have $a=\lambda_++\lambda_-=\lambda+k$ and $a=r_++r_-=r+k$. Thus, $\mathrm{{supp}}_G(v)=\{r\}$, that is, $|\mathrm{{supp}}_G(v)|=1$. This is a contradiction to that $G$ is a connected graph with $n\geq2$ vertices.

\emph{Case 2.2. } $a-2k=0$. This implies that $\lambda_\pm=k+\frac{1}{2}b_{\lambda_\pm}\sqrt{\Delta}$. Hence, for $\lambda=r$,
\begin{align*}
\frac{1}{2}(b_{r_+}+b_{r_-})\sqrt{\Delta} &=(r_+-k)+(r_--k)=r-k, 
\end{align*}
Clearly, one side of the above equation is integer and the other side is irrational, this is a contradiction.

(b) For the sufficiency, if there exists a positive square-free integer $\Delta$ such that for each eigenvalue $\lambda\in \mathrm{{supp}}_G(v)\setminus\{r\}$, all of the following conditions hold:
\begin{equation*}
\lambda-k=e_\lambda\sqrt{\Delta}, ~\sqrt{(\lambda-k)^2+4m}=f_\lambda\sqrt{\Delta} ~\text{and}~ \sqrt{4m(n-1)^2}=f_r\sqrt{\Delta},
\end{equation*}
where $e_\lambda$ and $f_\lambda$ are integers corresponding to $\lambda$, then
$$\lambda_\pm=\frac{1}{2}\left(2k+(e_\lambda\pm f_\lambda)\sqrt{\Delta}\right)~ (\lambda\neq r) \text{~and~} r_\pm=\frac{1}{2}\left(2k\pm f_r\sqrt{\Delta}\right).$$
By Lemma \ref{condition}, $(v,0)$ is a periodic vertex of $G\tilde\circ\overrightarrow{H}$.

For the necessity, by Lemma \ref{condition}, we consider the following two cases.

\emph{Case 1.} All eigenvalues in $\mathrm{{supp}}_{G\tilde\circ\overrightarrow{H}}((v,0))$ are integers. In this case, $\lambda-k=\lambda_++\lambda_--2k~(\lambda\neq r)$, $\sqrt{(\lambda-k)^2+4m}=\lambda_+-\lambda_-~(\lambda\neq r)$ and $\sqrt{4m(n-1)^2}=r_+-r_-$ are integers.

\emph{Case 2.} There are integer $a$ and square-free integer $\Delta\ge2$ such that each eigenvalue $\lambda_\pm\in \mathrm{{supp}}_{G\tilde\circ\overrightarrow{H}}((v,0))$ is of the form $\lambda_\pm=\frac{1}{2}(a+b_{\lambda_\pm}\sqrt{\Delta})$, where $b_{\lambda_\pm}$ are integers corresponding to eigenvalues $\lambda_\pm$.
Similar to the proof of Case 2 of (a), we have $a-2k=0$ or $b_{\lambda_+}+b_{\lambda_-}=0$ for each $\lambda\in\mathrm{{supp}}_G(v)$.  If $b_{\lambda_+}+b_{\lambda_-}=0$ for each $\lambda\in\mathrm{{supp}}_G(v) $, similar to the proof of Case 2.1 of (a), we also obtain a contradiction to that $G$ is a connected graph with $n\geq2$ vertices. If $a-2k=0$, then we have $\lambda_\pm=k+\frac{1}{2}b_{\lambda_\pm}\sqrt{\Delta}$. Hence,
\begin{align*}
\frac{1}{2}(b_{\lambda_+}+b_{\lambda_-})\sqrt{\Delta} &=(\lambda_+-k)+(\lambda_--k)=\lambda-k\text{~~for~}\lambda\neq r,\\[0.2cm]
\frac{1}{2}(b_{\lambda_+}-b_{\lambda_-})\sqrt{\Delta} & =\lambda_+-\lambda_- =\sqrt{(\lambda-k)^2+4m}\text{~~for~}\lambda\neq r,
\end{align*}
and
\begin{align*}
\frac{1}{2}(b_{r_+}-b_{r_-})\sqrt{\Delta}
&=r_+-r_-=\sqrt{4m(n-1)^2}.
\end{align*}
The above three equations imply that $\lambda-k$ for $\lambda\neq r$, $\sqrt{(\lambda-k)^2+4m}$ for $\lambda\neq r$ and $\sqrt{4m(n-1)^2}$ are of the form $x\sqrt{\Delta}/2$, where $x\in \mathbb{Z}$. Note that their squares are rational algebraic integers. Thus, their squares must be integers. Therefore,  $\lambda-k$ for $\lambda\neq r$, $\sqrt{(\lambda-k)^2+4m}$ for $\lambda\neq r$ and $\sqrt{4m(n-1)^2}$ are integer multiples of $\sqrt{\Delta}$.

The condition $\sqrt{4m(n-1)^2}$ is an integer multiple  of $\sqrt{\Delta}$ implies that $\Delta\mid m$ immediately.
\qed\end{proof}

By Lemma \ref{mainlemma}, we have the following result.

\begin{cor}\label{Kmlemma}
Let $G$ and $\overrightarrow{H}$ be as in Theorem \ref{eigenprojector}. If $(v,0)$ is a periodic vertex of $G\tilde\circ\overrightarrow{H}$, then
$$
m\geq|\lambda-k|+1 \mathrm{~for~}\lambda\in \mathrm{{supp}}_G(v)\setminus\{r\},
$$
and
$$
m(n-1)^2\geq |r-k|+1.
$$
\end{cor}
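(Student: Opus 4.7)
The plan is to apply Lemma \ref{mainlemma} directly and reduce each inequality to the following elementary fact: for positive integers $s,t$, one has $st\geq|t-s|+1$. (Indeed, assuming $t\geq s\geq 1$, $st-(t-s)=t(s-1)+s\geq 1$.) I will use this as the engine of the whole argument.

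For the first bound, I fix $\lambda\in\mathrm{supp}_G(v)\setminus\{r\}$ and split into the two cases of Lemma \ref{mainlemma}. In case (a), so $r\neq k$, both $\alpha:=\lambda-k$ and $\beta:=\sqrt{(\lambda-k)^2+4m}$ are integers with $\beta^2-\alpha^2=4m$. Since $\beta^2\equiv\alpha^2\pmod 4$, $\alpha$ and $\beta$ share the same parity, so $\beta-\alpha$ and $\beta+\alpha$ are both even. Writing $\beta-\alpha=2s$ and $\beta+\alpha=2t$ with $s,t\geq 1$ (the positivity comes from $\beta>|\alpha|$) yields $st=m$ and $\alpha=t-s$. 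The engine then gives $m=st\geq|t-s|+1=|\lambda-k|+1$. In case (b), so $r=k$, one has $\lambda-k=e_\lambda\sqrt{\Delta}$ and $\sqrt{(\lambda-k)^2+4m}=f_\lambda\sqrt{\Delta}$ for integers $e_\lambda,f_\lambda$, and $\Delta\mid m$. The identity $(f_\lambda-e_\lambda)(f_\lambda+e_\lambda)=4m/\Delta$, together with the same parity argument (now using that $4m/\Delta$ is a multiple of $4$), gives $m/\Delta\geq|e_\lambda|+1$. Hence
\[
m\geq\Delta(|e_\lambda|+1)\geq\sqrt{\Delta}\,|e_\lambda|+1=|\lambda-k|+1,
\]
where the last inequality uses $\Delta\geq\sqrt{\Delta}$ and $\Delta\geq 1$.

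For the second bound, the exact same scheme applies with $\lambda$ replaced by $r$ and $4m$ replaced by $4m(n-1)^2$. In case (a), $r_+$ and $r_-$ are integers by Lemma \ref{condition}, so $r+k=r_++r_-$ is an integer, and hence $r-k$ is an integer; the engine then yields $m(n-1)^2\geq|r-k|+1$. In case (b) we have $r=k$, so $|r-k|=0$ and the desired inequality reduces to $m(n-1)^2\geq 1$, which is immediate since $m\geq 1$ and $n\geq 2$.

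The only technical points are the parity verifications (ensuring the two factors of $4m$, respectively $4m/\Delta$, are both even) and the bookkeeping in case (b) that converts $\sqrt{\Delta}$-multiples back to ordinary magnitudes. I expect both to be short and routine, so the whole corollary should follow as a direct consequence of Lemma \ref{mainlemma} with no further ingredients required.
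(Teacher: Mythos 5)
Your proposal is correct and follows essentially the same route as the paper: both arguments reduce to Lemma \ref{mainlemma} plus the observation that $(\lambda-k)^2$ and $(\lambda-k)^2+4m$ are same-parity squares, your factorization $st=m$ with $st\geq|t-s|+1$ being just a repackaging of the paper's bound $4m\geq(|\lambda-k|+2)^2-|\lambda-k|^2$. The case split on $r=k$ versus $r\neq k$ and the handling of the $\sqrt{\Delta}$ factors also match the paper's proof.
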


\begin{proof}
\emph{Case 1.} $r\neq k$.
If $(v,0)$ is a periodic vertex of $G\tilde\circ\overrightarrow{H}$, then by Lemma \ref{mainlemma} (a), for each eigenvalue $\lambda\in \mathrm{{supp}}_G(v)\setminus\{r\}$, all $(\lambda-k)^2$, $(r-k)^2$, $(\lambda-k)^2+4m$ and $(r-k)^2+4m(n-1)^2$ are squares. Since $4m$ and $4m(n-1)^2$ are even, $(\lambda-k)^2$ and$(\lambda-k)^2+4m$ have the same parity. Similarly, $(r-k)^2$ and $(r-k)^2+4m(n-1)^2$ have the same parity. Hence,
\begin{equation*}
4m\ge \left(|\lambda-k|+2\right)^2- |\lambda-k|^2 =4\left(|\lambda-k|+1\right) \text{~for~} \lambda\in \mathrm{{supp}}_G(v)\setminus\{r\},
\end{equation*}
and
\begin{equation*}
4m(n-1)^2 \ge \left(| r-k|+2\right)^2-\left(| r-k|\right)^2 = 4\left(|r-k|+1\right).
\end{equation*}
The required result is obtained by simplifying the above inequalities immediately.

\emph{Case 2.} $r=k$.
If $(v,0)$ is a periodic vertex of $G\tilde\circ\overrightarrow{H}$, then by Lemma \ref{mainlemma} (b), there exists a positive square-free integer $\Delta$ such that for each eigenvalue $\lambda\in \mathrm{{supp}}_G(v)\setminus\{r\}$, both $(\lambda-k)^2/\Delta$ and $\left((\lambda-k)^2+4m\right)/\Delta$ are squares. Recall that $(r-k)^2$ and $(\lambda-k)^2+4m$ have the same parity and $\Delta\mid m$. Hence,
\begin{equation*}
\frac{4m}{\Delta}\ge\left(\frac{|\lambda-k|}{\sqrt{\Delta}}+2\right)^2 -\left(\frac{|\lambda-k|}{\sqrt{\Delta}}\right)^2= 4\left(\frac{|\lambda-k|}{\sqrt{\Delta}}+1\right) \text{~for~} \lambda\in \mathrm{{supp}}_G(v)\setminus\{r\}.
\end{equation*}
Since $\Delta\geq1$, we have
\begin{equation*}
m\geq|\lambda-k|\sqrt{\Delta}+\Delta\geq|\lambda-k|+1\text{~for~} \lambda\in \mathrm{{supp}}_G(v)\setminus\{r\}.
\end{equation*}
Furthermore, $m(n-1)^2\geq1$ and then the second inequality holds.

This completes the proof.
\qed \end{proof}

As an application of Corollary \ref{Kmlemma}, we prove that there is no PST in vertex complemented corona $G\tilde\circ \overrightarrow{H}$, where $G$ is an $r$-regular connected graph, $\overrightarrow{H}=(K_m, K_m, \ldots, K_m)$ and $K_m$ denotes a complete graph on $m$ vertices. For the sake of simplicity, such a graph will be denoted by $G\tilde\circ K_m$.

\begin{cor}\label{kmpst}
Let $G$ be as in Theorem \ref{eigenprojector}. Then every vertex of $G\tilde\circ K_m$ is not periodic. Moreover, $G\tilde\circ K_m$ has no PST.
\end{cor}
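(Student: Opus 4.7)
The plan is to show that no vertex of $G\tilde\circ K_m$ is periodic, which (via Lemma~\ref{periodic-pst}) will rule out PST. By Lemma~\ref{vw-v0}, it suffices to treat vertices of the form $(v,0)$ with $v\in V(G)$, and I would apply Corollary~\ref{Kmlemma} with $k=m-1$, since $K_m$ is $(m-1)$-regular.

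The key step is to exhibit some $\lambda\in \mathrm{supp}_G(v)\setminus\{r\}$ with $\lambda<0$. For this I would read off the diagonal entry at $v$ from the spectral decomposition $A_G=\sum_\lambda \lambda E_\lambda(G)$. Since $G$ is $r$-regular connected with $n\geq 2$, one has $r\geq 1$ and $E_r(G)=\tfrac{1}{n}J_n$, so $E_r(G)_{vv}=\tfrac{1}{n}$. Because $(A_G)_{vv}=0$ and every $E_\lambda(G)_{vv}\geq 0$ (as $E_\lambda(G)$ is a PSD projector), the identity
\[
\sum_{\lambda\neq r}\lambda\, E_\lambda(G)_{vv}=-\frac{r}{n}<0
\]
forces at least one $\lambda\in \mathrm{supp}_G(v)\setminus\{r\}$ with $\lambda<0$ and $E_\lambda(G)_{vv}>0$.

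With such a $\lambda$ in hand, I would observe that $k=m-1\geq 0>\lambda$, so $|\lambda-(m-1)|=(m-1)-\lambda=(m-1)+|\lambda|$. If $(v,0)$ were periodic, Corollary~\ref{Kmlemma} would force
\[
m\geq |\lambda-(m-1)|+1=m+|\lambda|>m,
\]
a contradiction. Hence $(v,0)$ is not periodic, Lemma~\ref{vw-v0} upgrades this to every vertex of $G\tilde\circ K_m$, and Lemma~\ref{periodic-pst} then precludes PST.

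The only nontrivial ingredient is extracting a negative eigenvalue in $\mathrm{supp}_G(v)\setminus\{r\}$ from the trace identity; once that is established, everything else is an immediate application of the machinery developed earlier in the paper.
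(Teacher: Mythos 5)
Your proposal is correct and follows essentially the same route as the paper: both reduce to $(v,0)$ via Lemma~\ref{vw-v0}, extract a negative eigenvalue in $\mathrm{supp}_G(v)$ from the identity $\mathbf{e}_v^\top A_G\mathbf{e}_v=0$ together with $\mathbf{e}_v^\top E_r(G)\mathbf{e}_v=\tfrac{1}{n}>0$, and then contradict Corollary~\ref{Kmlemma} with $k=m-1$. Your phrasing of the spectral step (summing over $\lambda\neq r$ directly) is a minor streamlining of the paper's contrapositive argument, not a different method.
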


\begin{proof}
Suppose that the vertex $(v,0)$ is a periodic vertex of $G\tilde\circ K_m$, where $v$ is a vertex of $G$. We claim that there exists a negative eigenvalue in the eigenvalue support of $v$ in $G$. Otherwise, assume that every eigenvalue in $\mathrm{{supp}}_G(v)$ is non-negative. Then $E_\lambda(G) \mathbf{e}_v=\mathbf{0}$ for each negative eigenvalue $\lambda\in \mathrm{{Spec}}_G$.
Note that
$$
\mathbf{e}_v^{\top}  A_G \mathbf{e}_v =\sum_{\lambda\in \mathrm{{Spec}}_G}\lambda \mathbf{e}_v^{\top} E_\lambda(G) \mathbf{e}_v=0.
$$
Then $\mathbf{e}_v^{\top} E_\lambda(G) \mathbf{e}_v=0$ for each positive eigenvalue $\lambda\in \mathrm{{Spec}}_G$. Note that $E_r(G)=\frac{1}{n}J_n$ and thus $\mathbf{e}_v^{\top} E_r(G) \mathbf{e}_v=\frac{1}{n}\ne0$, a contradiction.  Hence, there exists a negative eigenvalue $\lambda<0$ in $\mathrm{{supp}}_G(v)$. Then, $\lambda-(m-1)<0$. By Corollary \ref{Kmlemma}, we have
$$
m\geq\mid\lambda-(m-1)\mid+1=-\lambda+(m-1)+1>m,
$$
a contradiction. Therefore, $(v,0)$ is not a periodic vertex of $G\tilde\circ K_m$. By Lemma \ref{vw-v0}, we conclude that every vertex of $G\tilde\circ K_m$ is not periodic. Moreover, by Lemma \ref{periodic-pst}, $G\tilde\circ K_m$ has no PST.
\qed \end{proof}

By Lemma \ref{periodic-pst}, we know that periodicity is a necessary condition for a graph to have PST. In the following, we give a sufficient condition for a vertex  complemented corona to not be periodic.

\begin{theorem}\label{miantheorem}
Let $G$ and $\overrightarrow{H}$ be as in Theorem \ref{eigenprojector}, and let $v$ be a vertex of $G$.
\begin{itemize}
\item[\rm (a)] If there are two distinct eigenvalues $\lambda,\mu\in \mathrm{{supp}}_G(v)\setminus\{r\}$ such that
    \begin{equation}\label{mainE1}
  |\lambda-k|-|\mu-k|\in\left\{\sqrt{\Delta}, 2\sqrt{\Delta}\right\}
    \end{equation}
  for some square-free integer $\Delta$, then $(v,w)$ is not a periodic vertex of $G\tilde\circ\overrightarrow{H}$, for all $w\in V(H_i)\cup\{0\}$.
\item[\rm (b)] If there is an eigenvalue $\kappa\in \mathrm{{supp}}_G(v)\setminus\{r\}$ such that
\begin{equation}\label{mainE2}
\big||r-k|-(n-1)|\kappa-k |\big|\in\left\{\sqrt{\Delta},2\sqrt{\Delta}\right\}
\end{equation}
for some square-free integer $\Delta$, then $(v,w)$ is not a periodic vertex of $G\tilde\circ\overrightarrow{H}$, for all $w\in V(H_i)\cup\{0\}$.
\end{itemize}
\end{theorem}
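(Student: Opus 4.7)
My plan is to prove both parts by contradiction via Lemma \ref{vw-v0} and Lemma \ref{mainlemma}. Since Lemma \ref{vw-v0} tells us that periodicity of $(v,w)$ forces periodicity of $(v,0)$, it suffices to rule out periodicity of $(v,0)$. Assuming the contrary, Lemma \ref{mainlemma} supplies a positive square-free integer $\Delta'$ (with the convention $\Delta'=1$ when $r\neq k$) and integers $e_\lambda,f_\lambda$ such that
\[
\lambda-k = e_\lambda\sqrt{\Delta'}, \qquad \sqrt{(\lambda-k)^2+4m} = f_\lambda\sqrt{\Delta'}
\]
for every $\lambda \in \mathrm{{supp}}_G(v)\setminus\{r\}$. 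Together with $\Delta'\mid m$, which is extracted from the proof of Corollary \ref{Kmlemma}, this will yield the uniform identity $f_\lambda^2 - e_\lambda^2 = 4m/\Delta'$, a positive integer divisible by $4$; in particular $e_\lambda\equiv f_\lambda \pmod 2$.

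For part (a), inserting $\lambda$ and $\mu$ into the hypothesis gives $(|e_\lambda|-|e_\mu|)\sqrt{\Delta'} = c\sqrt{\Delta}$ with $c\in\{1,2\}$, and Corollary \ref{independent} will force $\Delta=\Delta'$ together with $\big||e_\lambda|-|e_\mu|\big|=c$. Writing $P=|e_\lambda|$, $Q=|e_\mu|$, $A=f_\lambda$, $B=f_\mu$ with $P>Q$, I will have $A^2-B^2 = P^2-Q^2 = c(P+Q)$ and $A^2-P^2 = B^2-Q^2>0$. The congruences $A\equiv P$ and $B\equiv Q \pmod 2$, combined with $A>P$ and $B>Q$, will yield $A\ge P+2$ and $B\ge Q+2$, so $A+B\ge P+Q+4$. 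Because $A-B$ and $P-Q$ share parity, $A-B\ge c$, and $(A-B)(A+B)=c(P+Q)$ then forces $A+B\le P+Q$, a contradiction.

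For part (b), I will split on whether $r=k$. When $r\neq k$, both $|r-k|$ and $(n-1)|\kappa-k|$ are non-negative integers, so $\big||r-k|-(n-1)|\kappa-k|\big|=c\sqrt{\Delta}$ immediately forces $\Delta=1$ and $c\in\{1,2\}$. Setting $P'=|r-k|$, $Q'=(n-1)|\kappa-k|$, $R=\sqrt{(r-k)^2+4m(n-1)^2}$, and $T=(n-1)\sqrt{(\kappa-k)^2+4m}$, one checks $R^2-P'^2 = T^2-Q'^2 = 4m(n-1)^2$ is divisible by $4$, and the argument from part (a) applies verbatim to $(P',Q',R,T)$ to give a contradiction. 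When $r=k$, the hypothesis becomes $(n-1)|e_\kappa|\sqrt{\Delta'}=c\sqrt{\Delta}$, so Corollary \ref{independent} gives $\Delta=\Delta'$ and $(n-1)|e_\kappa|=c\in\{1,2\}$, leaving only $(n,|e_\kappa|)\in\{(2,1),(2,2),(3,1)\}$. In each case, introducing the non-negative integer $h=g_r/(n-1)$ arising from $\sqrt{4m(n-1)^2}=g_r\sqrt{\Delta}$ together with $g_r^2=(n-1)^2(f_\kappa^2-e_\kappa^2)$ produces the Pythagorean equation $f_\kappa^2-h^2=e_\kappa^2$ with $e_\kappa^2\in\{1,4\}$; its only non-negative integer solution has $h=0$ and $f_\kappa=|e_\kappa|$, forcing $4m/\Delta'=f_\kappa^2-e_\kappa^2=0$, contradicting $m\ge 1$.

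The main obstacle will be the interlocking parity and size bookkeeping in part (a): one must use divisibility of $4m/\Delta'$ by $4$ to upgrade the strict inequalities $A>P$ and $B>Q$ to $A\ge P+2$ and $B\ge Q+2$, and then play this against the tight divisor bound $A+B\le P+Q$ coming from $(A-B)(A+B)=c(P+Q)$. A secondary delicacy is the case $r=k$ of part (b), where Corollary \ref{independent} cuts matters down to three arithmetic triples, each of which happens to force the degenerate solution $h=0$ of the associated Pythagorean equation.
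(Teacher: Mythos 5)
Your proposal is correct and follows essentially the same route as the paper: reduce to $(v,0)$ via Lemma \ref{vw-v0}, invoke Lemma \ref{mainlemma} to make $\lambda-k$ and $\sqrt{(\lambda-k)^2+4m}$ integer multiples of a common $\sqrt{\Delta'}$, and then derive a contradiction from a difference-of-squares bound combined with a parity argument when the gap is $2$. Your bookkeeping is in fact slightly more careful than the paper's in two spots --- you reconcile the hypothesis' $\Delta$ with the lemma's $\Delta'$ explicitly via Corollary \ref{independent}, and you work directly with the pair $\lambda,\mu$ from the hypothesis rather than with a minimizing pair --- but the underlying argument is the same.
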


\begin{proof}
(a) By Lemma \ref{vw-v0}, we just need to show that $(v,0)$ is not a periodic vertex of $G\tilde\circ\overrightarrow{H}$. By contradiction, suppose that $(v,0)$ is a periodic vertex. By Lemma \ref{mainlemma}, there exists a square-free integer $\Delta\geq1$ such that for each eigenvalue $\lambda\in \mathrm{{supp}}_G(v)\setminus\{r\}$, both $\lambda-k$ and $\sqrt{(\lambda-k)^2+4m}$ are integer multiples of $\sqrt{\Delta}$. Define
$$
\delta:=\frac{1}{\sqrt{\Delta}}\min\left\{\big||\lambda_1-k|-|\lambda_2-k|\big|: \lambda_1, \lambda_2\in \mathrm{{supp}}_G(v)\setminus\{r\}\right\}.
$$
Assume that $\lambda$ and $\mu$ are two eigenvalues achieving the above minimum. Define
$$
n_\lambda:=\frac{|\lambda-k|}{\sqrt{\Delta}}, \text{~and~} n_\mu:=\frac{|\mu-k|}{\sqrt{\Delta}},
$$
and suppose that $\delta=n_\lambda-n_\mu$.
It is already noted in the beginning of the proof that $n_\lambda^2+4m/\Delta$ and $n_\mu^2+4m/\Delta$ are squares. Define
\begin{equation*}
p:=\sqrt{n_\mu^2+\frac{4m}{\Delta}}, \text{~and~} q:=\sqrt{n_\lambda^2+\frac{4m}{\Delta}}.
\end{equation*}
Then
\begin{equation*}
q+p>n_\lambda+n_\mu=2n_\mu+\delta,  \text{~and~} q^2-p^2=(2n_\mu+\delta)\delta,
\end{equation*}

\noindent which implies $q-p<\delta$. By (\ref{mainE1}), we have $\delta=1,2$. If $\delta=1$, then $q-p<1$, which cannot occur.
If $\delta=2$, then $q-p<2$, which contradicts that $p$ and $q$ have the same parity.

(b) Similar to the proof of (a), suppose that $(v,0)$ is a periodic vertex. Consider the following two cases.

\emph{Case 1.} $r\neq k$. Define
$$
\sigma:=\min\left\{\big||r-k|-(n-1)|\kappa-k|\big|: \kappa\in \mathrm{{supp}}_G(v)\setminus\{r\}\right\}.
$$
Assume that $\theta$ is an eigenvalue achieving the above minimum. Define
$$
n_r:=| r-k|, \text{~and~} n_\theta:=|\theta-k|,
$$
and suppose that $\sigma:=|n_r-(n-1)n_\theta|$.
By Lemma \ref{mainlemma},
$n_\theta^2+4m$ and $n_r^2+4m(n-1)^2$ are squares.  Let
\begin{equation*}
s:=\sqrt{n_\theta^2+4m}, \text{~and~}  t:=\sqrt{n_r^2+4m(n-1)^2}.
\end{equation*}
Then
\begin{equation*}
(n-1)s+t>(n-1)n_\theta+n_r, \text{~and~} |t^2-((n-1)s)^2|=((n-1)n_\theta+n_r)\sigma,
\end{equation*}
which implies $|t-(n-1)s|<\sigma$. By (\ref{mainE2}), we have $\sigma=1,2$. If $\sigma=1$, then $|t-(n-1)s|<1$, which cannot occur.
If $\sigma=2$, then $|t-(n-1)s|<2$, which contradicts that $t$ and $(n-1)s$ have the same parity.

\emph{Case 2.} $r=k$.
Note that $\sqrt{4m(n-1)^2}$ is an integer multiple of $\sqrt{\Delta}$. Define
$$
\sigma:=\frac{1}{\sqrt{\Delta}}\min\left\{\big|(n-1)|\kappa-k|\big|: \kappa\in \mathrm{{supp}}_G(v)\setminus\{r\}\right\}.
$$
Assume that $\theta$ is an eigenvalue achieving the above minimum. Define
$$
n_\theta:=\frac{|\theta-k|}{\sqrt{\Delta}},
$$
and suppose that  $\sigma:=(n-1)n_\theta$.
By Lemma \ref{mainlemma},
$n_\theta^2+4m/\Delta$ is a square. Let
\begin{equation*}
s:=\sqrt{n_\theta^2+\frac{4m}{\Delta}},\text{~and~}  t:=\sqrt{\frac{4m(n-1)^2}{\Delta}}.
\end{equation*}
Then
\begin{equation*}
(n-1)s+t>\sigma, \text{~and~}  |t^2-((n-1)s)^2|= \sigma^2,
\end{equation*}
which implies $|t-(n-1)s|<\sigma$. By (\ref{mainE2}), we have $\sigma=1,2$. If $\sigma=1$, then $|t-(n-1)s|<1$, which cannot occur.
If $\sigma=2$, then $|t-(n-1)s|<2$, which contradicts that $t$ and $(n-1)s$ have the same parity.

This completes the proof.
\qed\end{proof}

\begin{cor}\label{pstcor}
Let $G$ and $\overrightarrow{H}$ be as in Theorem \ref{eigenprojector}, and let $v$ be a vertex of $G$.
\begin{itemize}
\item[\rm (a)] If there are two distinct eigenvalues $\lambda,\mu\in \mathrm{{supp}}_G(v)\setminus\{r\}$ such that
    \begin{equation}\label{corE1}
    0<|\lambda-k|-|\mu-k|<3,
    \end{equation}
    then $(v,w)$ is not periodic in $G\tilde\circ\overrightarrow{H}$, for all $w\in V(H_i)\cup\{0\}$.
\item[\rm (b)] If there is an eigenvalue $\kappa\in \mathrm{{supp}}_G(v)\setminus\{r\}$ such that
\begin{equation}\label{corE2}
0<\big|| r-k|-(n-1)|\kappa-k |\big|<3,
\end{equation}
then $(v,w)$ is not periodic in $G\tilde\circ\overrightarrow{H}$, for all $w\in V(H_i)\cup\{0\}$.
\end{itemize}
\end{cor}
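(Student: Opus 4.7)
\textbf{Proof proposal for Corollary \ref{pstcor}.}

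The plan is to reduce both parts to Theorem~\ref{miantheorem} by showing that, if $(v,0)$ were periodic, the quantitative gap in the hypothesis would automatically lie in a set of the form $\{\sqrt{\Delta},2\sqrt{\Delta}\}$ for a square-free integer $\Delta\geq 1$ furnished by Lemma~\ref{mainlemma}. Since Lemma~\ref{vw-v0} propagates non-periodicity from $(v,0)$ to every $(v,w)$, I will argue only at $(v,0)$.

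For part (a), I argue by contradiction and assume $(v,0)$ is periodic. Lemma~\ref{mainlemma} then yields a square-free integer $\Delta\geq 1$ (with $\Delta=1$ when $r\neq k$ from case (a) of that lemma, and the $\Delta$ from case (b) when $r=k$) such that $\lambda-k$ is an integer multiple of $\sqrt{\Delta}$ for every $\lambda\in\mathrm{supp}_G(v)\setminus\{r\}$. In particular, $|\lambda-k|-|\mu-k|=c\sqrt{\Delta}$ for some positive integer $c$, the positivity coming from the hypothesis $|\lambda-k|-|\mu-k|>0$. Since $c\sqrt{\Delta}<3$ and $\sqrt{\Delta}\geq 1$, the only possibilities are $c\in\{1,2\}$, placing $|\lambda-k|-|\mu-k|$ into $\{\sqrt{\Delta},2\sqrt{\Delta}\}$. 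Theorem~\ref{miantheorem}(a) now asserts that $(v,0)$ is not periodic, a contradiction.

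For part (b), I follow the same template, again assuming periodicity at $(v,0)$, but split on whether $r=k$. When $r\neq k$, Lemma~\ref{mainlemma}(a) makes both $r-k$ and $\kappa-k$ integers, so $\bigl||r-k|-(n-1)|\kappa-k|\bigr|$ is a positive integer strictly less than $3$, hence lies in $\{1,2\}=\{\sqrt{1},2\sqrt{1}\}$, and Theorem~\ref{miantheorem}(b) closes the case with $\Delta=1$. When $r=k$, the expression collapses to $(n-1)|\kappa-k|$, and Lemma~\ref{mainlemma}(b) guarantees this is a positive integer multiple of $\sqrt{\Delta}$ for some square-free $\Delta\geq 1$; the same pinching as in (a) places it in $\{\sqrt{\Delta},2\sqrt{\Delta}\}$, and Theorem~\ref{miantheorem}(b) produces the contradiction.

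No step is genuinely difficult: the only subtlety is the case split $r=k$ versus $r\neq k$ in part (b), where one must notice that the degenerate vanishing of $|r-k|$ is precisely what makes $(n-1)|\kappa-k|$ (rather than the original difference) the quantity being pinched. The remaining manipulations are the routine observation that an integer multiple of $\sqrt{\Delta}$ that is strictly less than $3$ and strictly positive has multiplier $1$ or $2$ when $\Delta\geq 1$.
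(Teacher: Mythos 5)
Your proposal is correct and follows essentially the same route as the paper: assume $(v,0)$ is periodic, invoke Lemma~\ref{mainlemma} to force the relevant quantity to be a positive integer multiple of $\sqrt{\Delta}$, pinch the multiplier to $1$ or $2$ using the bound $0<\cdot<3$, and contradict Theorem~\ref{miantheorem}; the only cosmetic difference is that the paper explicitly enumerates the eight admissible values $\{\sqrt{1},\sqrt{2},\sqrt{3},2\sqrt{1},\sqrt{5},\sqrt{6},\sqrt{7},2\sqrt{2}\}$ where you argue abstractly that the multiplier lies in $\{1,2\}$. The case split on $r=k$ versus $r\neq k$ in part (b) also matches the paper exactly.
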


\begin{proof}
(a) By contradiction, suppose that $(v,0)$ is a periodic vertex. By Lemma \ref{mainlemma}, there exists a square-free integer $\Delta\geq1$ such that both $\lambda-k$ and $\mu-k$ are integer multiples of $\sqrt{\Delta}$.
By (\ref{corE1}), we have
$$
|\lambda-k|-|\mu-k|\in\left\{\sqrt{1}, \sqrt{2}, \sqrt{3}, 2\sqrt{1}, \sqrt{5}, \sqrt{6}, \sqrt{7}, 2\sqrt{2}\right\}.
$$
This contradicts to Theorem \ref{miantheorem} (a).

(b) By contradiction, suppose that $(v,0)$ is a periodic vertex. Consider the following two cases.

 \emph{Case 1.} $r\neq k$.  By Lemma \ref{mainlemma} (a), both $\kappa-k$ and $r-k$ are integers. By (\ref{corE2}), we have
 $$
 \big|| r-k|-(n-1)|\kappa-k |\big|\in\left\{\sqrt{1}, 2\sqrt{1}\right\}.
 $$
This contradicts to Theorem \ref{miantheorem} (b).

\emph{Case 2.} $r=k$. By Lemma \ref{mainlemma} (b), $\kappa-k$ is an integer multiples $\sqrt{\Delta}$. By (\ref{corE2}), we have
$$
\big|(n-1)|\kappa-k|\big|\in\left\{\sqrt{1}, \sqrt{2}, \sqrt{3}, 2\sqrt{1}, \sqrt{5}, \sqrt{6}, \sqrt{7}, 2\sqrt{2}\right\}.
$$
This also contradicts Theorem \ref{miantheorem} (b).
\qed\end{proof}

\begin{example}\label{ex1}
{\em Let $G$ be the $d$-dimensional cube with $d\geq2$. Then the set of all distinct eigenvalues of $G$ is $\mathrm{Spec}_G=\{d-2l: 0\leq l\leq d\}$ \cite[Theorem 9.2.1]{BrouwerCN89}. Note that $G$ is a distance-regular graph. Then $\mathrm{Spec}_G$ is contained in the eigenvalue support of every vertex of $G$ \cite[Page~41]{Coh14}. In particular, $2-d$ and $-d$ are always eigenvalues of $G$. Therefore, for an arbitrarily $k$,
$$0<|-d-k|-|2-d-k|<3,$$
which satisfies the condition of the Corollary \ref{pstcor} (a). Hence, for an arbitrary $k$-regular graph $H$, every vertex of $G\tilde\circ H$ is not periodic. Moreover, by Lemma \ref{periodic-pst}, $G\tilde\circ H$ has no PST.
}
\end{example}

\subsection{PGST in vertex complemented coronas}

In this section, we prove that vertex complemented coronas have PGST. Before proceeding, we give the following result.

\begin{theorem}\label{PGSTVCCro-11}
Let $G$ and $\overrightarrow{H}$ be as in Theorem \ref{eigenprojector}, and let $u$ and $v$ be two distinct vertices of $G$. For each eigenvalue $\lambda\neq r$ of $G$, define $\Lambda_\lambda=\sqrt{(\lambda-k)^2+4m}$ and $\Lambda_r=\sqrt{(r-k)^2+4m(n-1)^2}$. Then
\begin{align*}
\mathbf{e}_{(u,0)}e^{-\mathrm{i}t A_{G\tilde\circ \overrightarrow{H}}}\mathbf{e}_{(v,0)}
=&\sum_{\lambda\in \mathrm{{Spec}}_G\setminus\left\{r\right\}}e^{-\mathrm{i}t(\lambda+k)/2}\left(\cos\left(\frac{\Lambda_\lambda t}{2}\right)-\mathrm{i}\frac{\lambda-k}{\Lambda_\lambda}\sin\left(\frac{\Lambda_\lambda t}{2}\right)\right)\mathbf{e}_u^\top E_\lambda(G)\mathbf{e}_v\\
&+e^{-\mathrm{i}t(r+k)/2}\left(\cos\left(\frac{\Lambda_r t}{2}\right)-\mathrm{i}\frac{r-k}{\Lambda_r}\sin\left(\frac{\Lambda_r t}{2}\right)\right)\mathbf{e}_u^\top E_r(G)\mathbf{e}_v.
\end{align*}
\end{theorem}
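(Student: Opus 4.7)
The plan is to start from the spectral decomposition in equation (\ref{Spec}) of Theorem \ref{theorem-eigenprojector}, substitute it into the expression $H_{A_{G\tilde\circ \overrightarrow{H}}}(t)=\sum_{j}\exp(-\mathrm{i}t\lambda_j)E_{\lambda_j}$ from (\ref{SpecDec2-1}), and then read off the $((u,0),(v,0))$-entry. Observing the block structure of the projectors in Theorem \ref{theorem-eigenprojector}, the ``$\mu$''-type projectors in (\ref{E1}) have zero top-left $n\times n$ block, so they contribute $0$ to $\mathbf{e}_{(u,0)}^\top H_{A_{G\tilde\circ \overrightarrow{H}}}(t)\mathbf{e}_{(v,0)}$. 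Only the projectors $E_{\lambda_\pm}$ and $E_{r_\pm}$ in (\ref{E2}) and (\ref{E3}) contribute, and the $((u,0),(v,0))$-entry picks out precisely $\frac{(\lambda_\pm-k)^2}{(\lambda_\pm-k)^2+m}\mathbf{e}_u^\top E_\lambda(G)\mathbf{e}_v$ (for $\lambda\neq r$) and $\frac{(r_\pm-k)^2}{(r_\pm-k)^2+m(n-1)^2}\mathbf{e}_u^\top E_r(G)\mathbf{e}_v$, respectively.

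Next, for each $\lambda\in\mathrm{Spec}_G\setminus\{r\}$ I would group the two terms $\pm$ together. Writing $\alpha:=\lambda-k$ and $\beta:=\Lambda_\lambda=\sqrt{\alpha^2+4m}$ so that $\lambda_\pm-k=(\alpha\pm\beta)/2$, a short computation gives $(\lambda_\pm-k)^2+m=\beta(\beta\pm\alpha)/2$ and therefore the coefficient simplifies cleanly to
\[
\frac{(\lambda_\pm-k)^2}{(\lambda_\pm-k)^2+m}=\frac{1}{2}\left(1\pm\frac{\alpha}{\beta}\right)=\frac{1}{2}\left(1\pm\frac{\lambda-k}{\Lambda_\lambda}\right).
\]
Similarly, splitting $\exp(-\mathrm{i}t\lambda_\pm)=\exp(-\mathrm{i}t(\lambda+k)/2)\exp(\mp\mathrm{i}t\Lambda_\lambda/2)$ and summing the two contributions gives exactly
\[
\mathrm{e}^{-\mathrm{i}t(\lambda+k)/2}\left[\cos\!\left(\frac{\Lambda_\lambda t}{2}\right)-\mathrm{i}\,\frac{\lambda-k}{\Lambda_\lambda}\sin\!\left(\frac{\Lambda_\lambda t}{2}\right)\right],
\]
by Euler's formula. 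Applying the identical algebraic simplification to the $r_\pm$ pair, but with $m$ replaced by $m(n-1)^2$ (and hence $\beta$ replaced by $\Lambda_r$) produces the $r$-term of the stated formula.

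Assembling all of these pieces and summing over $\lambda\in\mathrm{Spec}_G\setminus\{r\}$ (with the separate $r$-contribution added) yields the identity in the theorem. The only genuine computation is the pair of simplifications $(\lambda_\pm-k)^2+m=\tfrac{\Lambda_\lambda}{2}(\Lambda_\lambda\pm(\lambda-k))$ and its $r$-analogue; the rest is bookkeeping about which blocks of the projectors contribute. I do not anticipate any real obstacle, since the structural information from Theorem \ref{theorem-eigenprojector} (zero top-left blocks for $E_\mu$, and the explicit scalar multiples of $E_\lambda(G)$ and $E_r(G)$ appearing in $E_{\lambda_\pm}$ and $E_{r_\pm}$) makes the entire calculation linear in $\mathbf{e}_u^\top E_\lambda(G)\mathbf{e}_v$.
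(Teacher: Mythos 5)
Your proposal is correct and follows essentially the same route as the paper: extract the $((u,0),(v,0))$-entry from the spectral decomposition of Theorem \ref{theorem-eigenprojector}, note that the $E_\mu$ blocks contribute nothing, and pair the $\pm$ terms. The only difference is that you carry out by hand (correctly) the simplification $\frac{(\lambda_\pm-k)^2}{(\lambda_\pm-k)^2+m}=\frac{1}{2}\left(1\pm\frac{\lambda-k}{\Lambda_\lambda}\right)$ that the paper delegates to Maple in (\ref{PGSTE2})--(\ref{PGSTE3}).
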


\begin{proof}
Recall that $\lambda_\pm=\frac{1}{2}(\lambda+k\pm\Lambda_\lambda)$ for $\lambda\neq r$ and $r_\pm=\frac{1}{2}(r+k\pm\Lambda_r)$. By Theorem \ref{theorem-eigenprojector} and Equation (\ref{SpecDec2-1}), we have
 \begin{align}\label{PGSTE1}
\mathbf{e}^\top_{(u,0)}\mathbf{e}^{-\mathrm{i}t A_{G\tilde\circ \overrightarrow{H}}}\mathbf{e}_{(v,0)}
=&\sum_{\lambda\in \mathrm{{Spec}}_G\setminus\left\{r\right\}} e^{-\mathrm{i}t\frac{\lambda+k}{2}}
\mathbf{e}_u^\top E_\lambda(G)\mathbf{e}_v
\left(\sum_{\pm}
e^{\mp\mathrm{i}t\frac{\Lambda_\lambda}{2}} \frac{(\lambda_{\pm}-k)^2}{(\lambda_{\pm}-k)^2+m}\right)\nonumber\\
&+e^{-\mathrm{i}t\frac{r+k}{2}}
\mathbf{e}_u^\top E_r(G)\mathbf{e}_v
\left(\sum_{\pm}
e^{\mp\mathrm{i}t\frac{\Lambda_r }{2}}\frac{(r_{\pm}-k)^2}{(r_{\pm}-k)^2+m(n-1)^2}\right).
 \end{align}
By Maple, we have
\begin{equation}\label{PGSTE2}
\sum_{\pm}e^{\mp\mathrm{i}t\frac{\Lambda_\lambda }{2}}\frac{(\lambda_{\pm}-k)^2}{(\lambda_{\pm}-k)^2+m} =\cos\left(\frac{\Lambda_\lambda t}{2}\right)-\mathrm{i}\frac{\lambda-k}{\Lambda_\lambda}\sin\left(\frac{\Lambda_\lambda t}{2}\right),
\end{equation}
and
\begin{align}\label{PGSTE3}
\sum_{\pm}
e^{\mp\mathrm{i}t\frac{\Lambda_r }{2}}\frac{(r_{\pm}-k)^2}{(r_{\pm}-k)^2+m(n-1)^2} =\cos\left(\frac{\Lambda_r t}{2}\right)-\mathrm{i}\frac{r-k}{\Lambda_r}\sin\left(\frac{\Lambda_r t}{2}\right).
\end{align}
Plugging (\ref{PGSTE2}) and (\ref{PGSTE3}) into (\ref{PGSTE1}), we obtain the required result.
\qed\end{proof}

Let $G$ be a regular connected graph. From Corollary \ref{kmpst}, we know that $G\tilde\circ K_m$ has no PST. In contrast, we use Theorem \ref{PGSTVCCro-11} to prove that $G\tilde\circ K_1$ has PGST.

\begin{theorem}\label{pgst1}
Let $G$ be an $r$-regular connected graph with $n\ge2$ vertices and let $u$, $v$ be two distinct vertices of $G$. If there exists PST from $u$ to $v$ at time $t=\pi/g$, for some positive integer $g$, $0\notin \mathrm{{supp}}_G(u)$ and $r^2+4(n-1)^2$ is not a perfect square, then there exists PGST from $(u,0)$ to $(v,0)$ in $G\tilde\circ K_1$.
\end{theorem}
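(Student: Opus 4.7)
The plan is to use the PST in $G$ as a template: find times $\tau$ in $G\tilde\circ K_1$ at which the $(u,0)$-$(v,0)$ entry of the transition matrix closely tracks the (unit-modulus) PST amplitude in $G$ at a suitably related time. Specializing Theorem~\ref{PGSTVCCro-11} with $\overrightarrow{H}=(K_1,\dots,K_1)$ (so $m=1$, $k=0$, $\Lambda_\lambda=\sqrt{\lambda^2+4}$ for $\lambda\neq r$, and $\Lambda_r=\sqrt{r^2+4(n-1)^2}$) gives
\[
S(\tau):=\mathbf{e}^{\top}_{(u,0)}\,e^{-\mathrm{i}\tau A_{G\tilde\circ K_1}}\mathbf{e}_{(v,0)} =\sum_{\lambda\in\mathrm{supp}_G(u)}f_\lambda(\tau)\,\mathbf{e}_u^{\top}E_\lambda(G)\mathbf{e}_v,
\]
where $f_\lambda(\tau)=e^{-\mathrm{i}\tau\lambda/2}\big[\cos(\Lambda_\lambda\tau/2)-\mathrm{i}(\lambda/\Lambda_\lambda)\sin(\Lambda_\lambda\tau/2)\big]$ and $|f_\lambda(\tau)|^2=1-(1-\lambda^2/\Lambda_\lambda^2)\sin^2(\Lambda_\lambda\tau/2)$. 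So any approximation $|S(\tau)|\to 1$ forces $\sin(\Lambda_\lambda\tau/2)\to 0$ for every $\lambda\in\mathrm{supp}_G(u)$.

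Combining the PST at $\pi/g$ with Lemma~\ref{pst} forces $\Delta=1$ (otherwise $\pi/g$ could not equal an odd multiple of the minimum PST time $\pi/(g'\sqrt{\Delta})$), so every $\lambda\in\mathrm{supp}_G(u)$ is an integer with $g\mid\lambda-r$, and $g'=(2k+1)g$ for some $k\ge0$. A key consequence is that $|H_G(\pi\ell/g)_{u,v}|=1$ for every odd positive integer $\ell$, since $\pi\ell/g=(2k+1)\ell\cdot\pi/g'$ is then an odd multiple of the minimum PST time. I will therefore set $\tau=2\pi\ell/g$ with $\ell$ an odd positive integer chosen so that $\ell\Lambda_\lambda/g$ lies within $\delta$ of an \emph{even} integer for every $\lambda\in\mathrm{supp}_G(u)$. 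Then $\cos(\Lambda_\lambda\tau/2)\approx 1$ and $\sin(\Lambda_\lambda\tau/2)\approx 0$, so $f_\lambda(\tau)\approx e^{-\mathrm{i}\pi\ell\lambda/g}$ uniformly in $\lambda$; combined with $\sum_\lambda|\mathbf{e}_u^{\top}E_\lambda(G)\mathbf{e}_v|\le 1$ this yields $S(\tau)\approx H_G(\pi\ell/g)_{u,v}$, whose modulus is $1$, giving PGST.

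To produce such an $\ell$, note that $0\notin\mathrm{supp}_G(u)$ makes $\Lambda_\lambda=\sqrt{\lambda^2+4}$ irrational for every $\lambda\in\mathrm{supp}_G(u)\setminus\{r\}$ (since $\lambda^2+4$ is a perfect square only at $\lambda=0$), while the hypothesis on $r^2+4(n-1)^2$ makes $\Lambda_r$ irrational. Write $\Lambda_\lambda=a_\lambda\sqrt{D_\lambda}$ with $a_\lambda\in\mathbb{Z}_{>0}$ and $D_\lambda>1$ square-free, and let $D_1,\dots,D_M$ be the distinct values. Corollary~\ref{independent} together with a scaling argument shows $\{1,\sqrt{D_1}/(2g),\dots,\sqrt{D_M}/(2g)\}$ is linearly independent over $\mathbb{Q}$. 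Apply Theorem~\ref{hardy-wright} to $\{\sqrt{D_j}/g\}$ with targets $\alpha_j:=-\sqrt{D_j}/(2g)$ to obtain integers $L$ and $q_1,\dots,q_M$ with $\bigl|L\sqrt{D_j}/g+\sqrt{D_j}/(2g)-q_j\bigr|<\varepsilon$ for every $j$; enlarging $L$ along the equidistributed orbit ensures $L\ge 0$. Then $\ell:=2L+1$ is an odd positive integer, $\ell\sqrt{D_j}/(2g)$ is $\varepsilon$-close to $q_j$, and consequently $\ell\Lambda_\lambda/g=2a_\lambda\cdot\ell\sqrt{D_\lambda}/(2g)$ is within $2a_\lambda\varepsilon$ of the even integer $2a_\lambda q_{j(\lambda)}$, as required.

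The main obstacle is the simultaneous parity constraint: $\ell$ must be odd (so that $\pi\ell/g$ remains a PST time in $G$) \emph{and} $\ell\Lambda_\lambda/g$ must approach an \emph{even} integer (otherwise $\cos(\Lambda_\lambda\tau/2)\to\pm1$ could introduce sign flips $(-1)^{n_\lambda}$ causing destructive interference that prevents $S(\tau)$ from matching $H_G(\pi\ell/g)_{u,v}$). The shifted-target device $\ell=2L+1$, applied to the scaled system $\sqrt{D_j}/(2g)$ with shift $-\sqrt{D_j}/(2g)$, is the mechanism that couples both requirements into a single Kronecker approximation.
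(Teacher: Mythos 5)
Your argument is correct and follows essentially the same route as the paper: specialize Theorem~\ref{PGSTVCCro-11} to $m=1$, $k=0$, deduce integrality of $\mathrm{supp}_G(u)$ from Lemma~\ref{pst}, use $0\notin\mathrm{supp}_G(u)$ and the non-square hypothesis to make every $\Lambda_\lambda$ irrational, and apply Corollary~\ref{independent} together with Theorem~\ref{hardy-wright} with the shifted targets $-\sqrt{c_\lambda}/(2g)$ so that all cosines tend to $1$ while the residual phases reproduce a unit-modulus PST amplitude of $G$. The only (harmless) deviation is that the paper takes $T=(4l+2/g)\pi=2\pi(2lg+1)/g$, so the $G$-phase is exactly $e^{-\mathrm{i}\pi\lambda/g}$ and no appeal to the recurrence of PST at odd multiples of $\pi/g$ is needed, whereas your $\ell=2L+1$ is only odd and you correctly invoke that recurrence.
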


\begin{proof}
Note that there exists PST from $u$ to $v$ at time $t=\pi/g$ in $G$, for some integer $g$. According to the last sentence of Lemma \ref{pst}, we have $\Delta=1$, that is, all eigenvalues in $\mathrm{{supp}}_G(u)$ are integers.  Note that $r$ is always in $\mathrm{{supp}}_G(u)$. For each eigenvalue $\lambda\in \mathrm{{supp}}_G(u)\setminus\{r\}$, let $c_\lambda$ be the square-free part of $\lambda^2+4$. Then
$$
\Lambda_\lambda=\sqrt{\lambda^2+4}=s_\lambda\sqrt{c_\lambda}
$$
for some integer $s_\lambda$. Note that $0\notin \mathrm{{supp}}_G(u)$. Then $\Lambda_\lambda$ is irrational and $c_\lambda>1$ for each $\lambda\in \mathrm{{supp}}_G(u)\setminus\{r\}$.

Notice that $r^2+4(n-1)^2$ is not a perfect square. Then $\Lambda_r=\sqrt{r^2+4(n-1)^2}$ is irrational. Let $c_r$ be the square-free part of $r^2+4(n-1)^2$. Then $\Lambda_r=s_r\sqrt{c_r}$ for some integer $s_r$.

By Corollary \ref{independent},
 $$
 \left\{\sqrt{c_\lambda}:\lambda\in \mathrm{{supp}}_G(u)\right\}\cup\{1\}
 $$
 is linearly independent over $\mathbb{Q}$. By Theorem \ref{hardy-wright}, there exist integers $l$, $q_\lambda$ such that
 \begin{equation}\label{approx1}
l\sqrt{c_\lambda}-q_\lambda\approx-\frac{\sqrt{c_\lambda}}{2g} \text{~~for~}\lambda\in \mathrm{{supp}}_G(u).
\end{equation}
 Multiplying both sides of (\ref{approx1}) by $4s_\lambda$, we have
 \begin{equation*}
 \left(4l+\frac{2}{g}\right)\Lambda_\lambda\approx4q_\lambda s_\lambda\text{~~for~}\lambda\in \mathrm{{supp}}_G(u).
 \end{equation*}
In particular,
\begin{equation*}
 \left(4l+\frac{2}{g}\right)\Lambda_r\approx4q_r s_r.
\end{equation*}
Hence, let $T=(4l+2/g)\pi$, we have $\cos(\Lambda_\lambda T/2)\approx1$ for $\lambda\in \mathrm{{supp}}_G(u)$.
By Theorem \ref{PGSTVCCro-11},
\begin{align*}
\mathbf{e}_{(u,0)}e^{-\mathrm{i}T A_{G\tilde\circ K_1}}\mathbf{e}_{(v,0)}
=&\sum_{\lambda\in \mathrm{{Spec}}_G\setminus\left\{r\right\}}e^{-\mathrm{i}T\lambda/2} \left(\cos\left(\frac{\Lambda_\lambda T}{2}\right)-\mathrm{i}\frac{\lambda}{\Lambda_\lambda} \sin\left(\frac{\Lambda_\lambda T}{2}\right)\right)\mathbf{e}_u^\top E_\lambda(G)\mathbf{e}_v\\
&+e^{-\mathrm{i}Tr/2}\left(\cos\left(\frac{\Lambda_r T}{2}\right)-\mathrm{i}\frac{r}{\Lambda_r}\sin\left(\frac{\Lambda_r T}{2}\right)\right)\mathbf{e}_u^\top E_r(G)\mathbf{e}_v\\
\approx&\sum_{\lambda\in \mathrm{{Spec}}_G}e^{-\mathrm{i}(2\pi)l\lambda} e^{-\mathrm{i}\lambda\pi/g}\mathbf{e}_u^\top E_\lambda(G)\mathbf{e}_v\\
=&\mathbf{e}_u^\top e^{-\mathrm{i}(\pi/g)A_G}\mathbf{e}_v.
\end{align*}
Note that $G$ has PST from $u$ to $v$ at time $\pi/g$. Then $|\mathbf{e}_u^\top e^{-\mathrm{i}(\pi/g)A_G}\mathbf{e}_v|=1$. Therefore, $|\mathbf{e}_{(u,0)}e^{-\mathrm{i}T A_{G\tilde\circ K_1}}\mathbf{e}_{(v,0)}|\approx1$, that is, there exists PGST from $(u,0)$ to $(v,0)$ in $G\tilde\circ K_1$.
\qed\end{proof}

\begin{example}
{\em Let $G$ be the double coset graph of binary Golay code \cite[Page~415]{BrouwerCN89}. By Corollary \ref{kmpst}, $G\tilde\circ K_1$ has no PST. Let $u$, $v$ be two distinct vertices of $G$, the set of all distinct eigenvalues of $G$ is $\mathrm{Spec}_G=\{23, 9, 7, 1, -1, -7, -9, -23\}$ and $G$ has PST from $u$ to $v$ at time $\pi/2$ \cite[Page~122]{Coutinho15}. Note that $0\notin \mathrm{{supp}}_G(u)$ and the number of vertices $n=4096$. Then $23^2+4(4096-1)^2=67076629$ is not a perfect square. So by Theorem \ref{pgst1}, there exists PGST from $(u,0)$ to $(v,0)$ in $G\tilde\circ K_1$.

}

\end{example}

In Theorem \ref{pgst1}, $0$ is restricted in the eigenvalue support of $u$. However, if $0\in \mathrm{{supp}}_G(u)$, we need a stronger condition to get PGST in $G\tilde\circ K_1$.

\begin{theorem}\label{pgst2}
Let $G$ be an $r$-regular connected graph with $n\ge2$ vertices and let $u$, $v$ be two distinct vertices of $G$. If $G$ has PST from $u$ to $v$ at time $t=\pi/2$, $0\in \mathrm{{supp}}_G(u)$ and $r^2+4(n-1)^2$ is not a perfect square, then there exists  PGST from $(u,0)$ to $(v,0)$ in $G\tilde\circ K_1$.
\end{theorem}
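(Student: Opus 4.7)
The plan is to follow the template of Theorem~\ref{pgst1} via Theorem~\ref{PGSTVCCro-11}, but to adapt the choice of approximation time so as to accommodate the eigenvalue $0\in\mathrm{supp}_G(u)$. Since $G$ has PST at $t=\pi/2$, Lemma~\ref{pst} forces $\Delta=1$ and $g=2$, so every eigenvalue in $\mathrm{supp}_G(u)$ is an integer and $\gcd\{r-\lambda:\lambda\in\mathrm{supp}_G(u)\}=2$; combined with $0\in\mathrm{supp}_G(u)$, this makes $r$, and hence every $\lambda\in\mathrm{supp}_G(u)$, an even integer. The PST identity then reads
\[
\omega := \mathbf{e}_u^{\top}e^{-\mi\pi A_G/2}\mathbf{e}_v \;=\; \sum_{\lambda\in\mathrm{supp}_G(u)}(-1)^{\lambda/2}E_\lambda(G)_{u,v}, \qquad |\omega|=1.
\]

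For every nonzero even integer $\lambda$, the elementary bound $\lambda^2<\lambda^2+4<(|\lambda|+1)^2$ (valid since $|\lambda|\geq 2$) shows that $\lambda^2+4$ is not a perfect square, so $\Lambda_\lambda=\sqrt{\lambda^2+4}$ is irrational with square-free part $c_\lambda>1$; by hypothesis the same holds for $\Lambda_r=\sqrt{r^2+4(n-1)^2}$, with square-free part $c_r>1$. Write $\Lambda_\lambda=s_\lambda\sqrt{c_\lambda}$ and $\Lambda_r=s_r\sqrt{c_r}$. By Corollary~\ref{independent}, the collection $\{1\}\cup\{\sqrt{c}:c\in\{c_\lambda\}_\lambda\cup\{c_r\}\}$ of distinct square roots is linearly independent over $\mathbb{Q}$. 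Applying Theorem~\ref{hardy-wright} with multipliers $\Lambda_\lambda/2$ and targets $\epsilon_\lambda/2$, where $\epsilon_\lambda=(\lambda/2)\bmod 2\in\{0,1\}$, yields an integer $L$ such that $L\Lambda_\lambda$ approximates an integer $m_\lambda$ with $m_\lambda\equiv\lambda/2\pmod 2$ for every $\lambda\in(\mathrm{supp}_G(u)\setminus\{0\})\cup\{r\}$.

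Taking $T:=2\pi L$ and feeding these estimates into the formula of Theorem~\ref{PGSTVCCro-11}, the $\lambda=0$ contribution becomes $\cos T\approx 1=e^{-\mi\pi\cdot 0/2}$, while for every nonzero even $\lambda$ we have $e^{-\mi T\lambda/2}=e^{-\mi\pi L\lambda}=1$ (as $L\lambda$ is even), $\sin(\Lambda_\lambda T/2)\approx\sin(m_\lambda\pi)=0$, and $\cos(\Lambda_\lambda T/2)\approx(-1)^{m_\lambda}=(-1)^{\lambda/2}=e^{-\mi\pi\lambda/2}$. Summing, $\mathbf{e}_{(u,0)}^{\top}e^{-\mi T A_{G\tilde\circ K_1}}\mathbf{e}_{(v,0)}\approx\omega$, and $|\omega|=1$ yields PGST from $(u,0)$ to $(v,0)$. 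The main obstacle is the parity-controlled Kronecker step: the choice $T=(4l+1)\pi$ employed in Theorem~\ref{pgst1} would give $\cos T=-1$ and flip only the $\lambda=0$ term, destroying the overall modulus, so one must instead take $T$ near an even multiple of $\pi$ and arrange $\cos(\Lambda_\lambda T/2)$ itself to carry the sign $(-1)^{\lambda/2}$ for each nonzero $\lambda$. A secondary technicality is that distinct eigenvalues may share the same $c_\lambda$; this is handled by applying Theorem~\ref{hardy-wright} only to the distinct $c$-values and then verifying that the induced parity constraints are compatible across all eigenvalues sharing a given $c$.
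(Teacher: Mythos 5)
Your route is genuinely different from the paper's, and it contains a real gap precisely at the point you label a ``secondary technicality.'' You take $T=2\pi L$ and try to make each factor $\cos(\Lambda_\lambda T/2)$ individually land on the correct phase $(-1)^{\lambda/2}$, which forces you to control the \emph{parity} of the integer $m_\lambda$ approximated by $L\Lambda_\lambda$. When two eigenvalues share the same square-free part $c$, Kronecker's theorem gives you only one real target $\alpha_c$ for $\sqrt{c}$ (and an integer $q_c$ you do not control), so you must exhibit a single $\alpha_c$ making $s_\lambda(\alpha_c+q_c)$ an integer of the prescribed parity for every member of the class. For the $\Lambda_\lambda=\sqrt{\lambda^2+4}$ among themselves this can be salvaged (writing $\lambda=2e$, one checks $e$ is odd iff $c_\lambda$ is even and that $s_\lambda/2$ is always odd, so members of a class demand the same parity), but the term $\Lambda_r=\sqrt{r^2+4(n-1)^2}$ is structurally different: the parity of $r/2$ is \emph{not} determined by the parity of $c_r$, and if $c_r$ coincides with some $c_\lambda$ one can write down parity patterns (e.g.\ $c$ odd, $r/2$ odd, $\lambda/2$ even, $s_\lambda/2$ odd) for which no admissible $\alpha_c$ exists. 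You would then have to prove such configurations cannot occur under the hypotheses, which you do not do. Separately, your diagnosis that the choice $T=(4l+1)\pi$ ``flips only the $\lambda=0$ term'' misreads the intended argument: the targets are not the same as in Theorem~\ref{pgst1}.

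The paper avoids all of this. It keeps $T=(4l+1)\pi$ but chooses the Kronecker targets $l\sqrt{c_\lambda}-q_\lambda\approx-\tfrac{\sqrt{c_\lambda}}{4}+\tfrac{1}{2s_\lambda}$ so that $(4l+1)\Lambda_\lambda\approx 4q_\lambda s_\lambda+2$, i.e.\ $\cos(\Lambda_\lambda T/2)\approx-1$ for \emph{every} $\lambda\in\mathrm{supp}_G(u)\setminus\{0\}$ and for $r$; the uncontrollable $\lambda=0$ term is then harmless because $\Lambda_0=2$ gives $\cos(\Lambda_0T/2)=\cos((4l+1)\pi)=-1$ exactly, matching the others. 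The amplitude becomes $\approx-\mathbf{e}_u^\top e^{-\mathrm{i}(\pi/2)A_G}\mathbf{e}_v$, and the global sign does not affect the modulus. In particular the paper needs no parity bookkeeping for the individual $m_\lambda$ and no claim that the eigenvalues in $\mathrm{supp}_G(u)$ are even (your derivation of that fact via ``$g=2$'' is also imprecise: PST at $\pi/2$ only forces $g=2(2j+1)$ for some $j\ge0$, though evenness of the support still follows). If you want to keep your phase-matching strategy, you must either prove the parity constraints within each $c$-class are always simultaneously satisfiable, including the class containing $c_r$, or switch to the uniform sign-flip device.
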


\begin{proof}
Note that there exists PST from $u$ to $v$ at time $t=\pi/2$ in $G$. By Lemma \ref{pst}, all eigenvalues in $\mathrm{{supp}}_G(u)$ are integers.  Note that $r$ is always in $\mathrm{{supp}}_G(u)$. Then for each eigenvalue $\lambda\in \mathrm{{supp}}_G(u)\setminus\{ r\}$, let $c_\lambda$ be the square-free part of $\lambda^2+4$. Then
$$
\Lambda_\lambda=\sqrt{\lambda^2+4}=s_\lambda\sqrt{c_\lambda}
$$
for some integer $s_\lambda$. Note that  $\Lambda_\lambda$ is irrational and $c_\lambda>1$ for each $\lambda\in \mathrm{{supp}}_G(u)\setminus\{0, r\}$ and $c_\lambda=1$ if and only if $\lambda=0$.

Notice that $r^2+4(n-1)^2$ is not a perfect square. Then $\Lambda_r=\sqrt{r^2+4(n-1)^2}$ is irrational. Let $c_r$ be the square-free part of $r^2+4(n-1)^2$. Then $\Lambda_r=s_r\sqrt{c_r}$ for some integer $s_r$.

By Corollary \ref{independent},
$$\{\sqrt{c_\lambda}:\lambda\in \mathrm{{supp}}_G(u)\setminus\{0\}\}\cup\{1\}$$
is linearly independent over $\mathbb{Q}$.
By Theorem \ref{hardy-wright}, there exist integers $l$, $q_\lambda$ such that
 \begin{equation}\label{approx2}
l\sqrt{c_\lambda}-q_\lambda\approx-\frac{\sqrt{c_\lambda}}{4}+\frac{1}{2s_\lambda} \text{~~for~} \lambda\in \mathrm{{supp}}_G(u)\setminus\{ 0\}.
\end{equation}
 Multiplying both sides of (\ref{approx2}) by $4s_\lambda$, we have
 \begin{equation*}
 (4l+1)\Lambda_\lambda\approx4q_\lambda s_\lambda+2 \text{~~for~} \lambda\in \mathrm{{supp}}_G(u)\setminus\{ 0\}.
 \end{equation*}
Hence, let $T=(4l+1)\pi$, we have $\cos(\Lambda_0 T/2)=-1$ and $\cos(\Lambda_{\lambda} T/2)\approx-1$ for $\lambda\in \mathrm{{supp}}_G(u)\setminus\{ 0\}$. By Theorem \ref{PGSTVCCro-11},
\begin{align*}
\mathbf{e}_{(u,0)}e^{-\mathrm{i}T A_{G\tilde\circ K_1}}\mathbf{e}_{(v,0)}
=&\sum_{\lambda\in \mathrm{{Spec}}_G\setminus\left\{r\right\}}e^{-\mathrm{i}T\lambda/2} \left(\cos\left(\frac{\Lambda_\lambda T}{2}\right)-\mathrm{i}\frac{\lambda}{\Lambda_\lambda} \sin\left(\frac{\Lambda_\lambda T}{2}\right)\right)\mathbf{e}_u^\top E_\lambda(G)\mathbf{e}_v\\
&+e^{-\mathrm{i}Tr/2}\left(\cos\left(\frac{\Lambda_r T}{2}\right)-\mathrm{i}\frac{r}{\Lambda_r}\sin\left(\frac{\Lambda_r T}{2}\right)\right)\mathbf{e}_u^\top E_r(G)\mathbf{e}_v\\
\approx&-\sum_{\lambda\in \mathrm{{Spec}}_G}e^{-\mathrm{i}(2\pi)l\lambda} e^{-\mathrm{i}\lambda\pi/2}\mathbf{e}_u^\top E_\lambda(G)\mathbf{e}_v\\
=&-\mathbf{e}_u^\top e^{-\mathrm{i}(\pi/2)A_G}\mathbf{e}_v.
\end{align*}
 Note that  $G$ has PST from $u$ to $v$ at time $\pi/2$. Then $|\mathbf{e}_u^\top e^{-\mathrm{i}(\pi/2)A_G}\mathbf{e}_v|=1$. Therefore, $|\mathbf{e}_{(u,0)}e^{-\mathrm{i}T A_{G\tilde\circ K_1}}\mathbf{e}_{(v,0)}|\approx1$, that is, there is PGST between $(u,0)$ and $(v,0)$ in $G\tilde\circ K_1$.
\qed\end{proof}

\begin{example}
{\em Let $G$ be the  coset graph of the shortened binary Golay code \cite[Page~416]{BrouwerCN89} and let $u$, $v$ be two distinct vertices of $G$. The set of all distinct eigenvalues of $G$ is $\mathrm{Spec}_G=\{22, 8, 6, 0, -2, -8, -10\}$ and $G$ has PST from $u$ to $v$ at time $\pi/2$ \cite[Page~122]{Coutinho15}. Note that $G$ is a distance-regular graph. Then $\mathrm{Spec}_G$ is contained in the eigenvalue support of every vertex of $G$ \cite[Page~41]{Coh14}, that is, $0\in \mathrm{{supp}}_G(u)$.  Since the number of vertices $n=2048$, then $22^2+4(2048-1)^2=16761320$ is not a perfect square. So by Theorem \ref{pgst2}, there exists PGST from $(u,0)$ to $(v,0)$ in $G\tilde\circ K_1$.


}
\end{example}

%
%

\end{document}